\definecolor{Revolutionary}{RGB}{232,70,68}
\newcommand\reallywidetilde[1]{\ThisStyle{%
  \setbox0=\hbox{$\SavedStyle#1$}%
  \stackengine{-.1\LMpt}{$\SavedStyle#1$}{%
    \stretchto{\scaleto{\SavedStyle\mkern.2mu\AC}{.5150\wd0}}{.6\ht0}%
  }{O}{c}{F}{T}{S}%
}}
\newcommand{\teletype}[1]{\ensuremath{\mathtt{#1}}}
\newcommand{\systemname}[1]{\teletype{\color{darkgray}#1}\xspace}
\newcommand{\Agda}{\systemname{Agda}}
\newcommand{\agdaunimath}{\systemname{agda}-\systemname{unimath}}
\newcommand{\Coq}{\systemname{Coq}}
\newcommand{\CubicalAgda}{\systemname{Cubical} \systemname{Agda}}
\definecolor{Revolutionary}{RGB}{232,70,68}
\newcommand{\UniMath}{\systemname{UniMath}}
\newcommand{\ie}{{i.e.}}
\newcommand{\defeq}{\mathrel{:=}}
\newcommand{\agdalink}[1]{\href{#1}{\AgdaHen}}
\newcommand{\V}{\mathsf{V}\xspace}
\newcommand{\El}{\mathsf{El}\xspace}
\newcommand{\Vinf}{\ensuremath{\V^\infty}}
\newcommand{\Vz}{\V^0}
\newcommand{\Elzop}{\El^0}
\newcommand{\Elz}[1]{\Elzop\,#1}
\newcommand{\supop}{\mathsf{sup}}
\newcommand{\supc}[2]{\supop\,#1\,#2}
\newcommand{\supz}{\supop^0}
\newcommand{\supzc}[2]{\supz\,\left(#1,#2\right)}
\newcommand{\desupop}{\mathsf{desup}}
\newcommand{\desupz}{\desupop^0}
\newcommand{\ap}{\mathsf{ap}}
\newcommand{\refl}{\mathsf{refl}}
\newcommand{\U}{\mathcal{U}\xspace}
\newcommand{\TThProp}[1]{\mathsf{hProp}_{#1}}
\newcommand{\TThPropU}{\TThProp{\U}}
\newcommand{\TThSet}[1]{\mathsf{hSet}_{#1}}
\newcommand{\TThSetU}{\TThSet{\U}}
\newcommand{\TTempty}{\mathsf{empty}}
\newcommand{\TTemptyelim}{\TTempty\mbox{-}\mathsf{elim}}
\newcommand{\TTunit}{\mathsf{unit}}
\newcommand{\TTtt}{\mathsf{tt}}
\newcommand{\TTbool}{\mathsf{bool}}
\newcommand{\TTtrue}{\mathsf{true}}
\newcommand{\TTfalse}{\mathsf{false}}
\newcommand{\TTifthenelse}[3]{\mathsf{if}\,#1\,\mathsf{then}\,#2\,\mathsf{else}\,#3}
\newcommand{\TTnat}{\mathbb{N}}
\newcommand{\TTnatzero}{\mathsf{0}}
\newcommand{\TTnatsuc}{\mathsf{s}}
\newcommand{\TTFin}[1]{\mathsf{Fin}\,#1}
\newcommand{\TTCoprod}[2]{#1\,+\,#2}
\newcommand{\TTinl}{\mathsf{inl}}
\newcommand{\TTinr}{\mathsf{inr}}
\newcommand{\TTisemb}{\mathsf{is\mbox{-}emb}}
\newcommand{\TTisiterative}{\mathsf{is\mbox{-}iterative\mbox{-}set}}
\newcommand{\TTfst}{\mathsf{pr}_1}
\newcommand{\TTsnd}{\mathsf{pr}_2}
\newcommand{\Wop}{\mathsf{W}}
\newcommand{\W}[2]{\Wop_{#1}\,#2}
\newcommand{\fibop}{\mathsf{fib}}
\newcommand{\fib}[2]{\fibop\,#1\,#2}
\newcommand{\idequiv}{\mathsf{id\mbox{-}equiv}}
\newcommand{\reflhtpy}{\mathsf{refl\mbox{-}htpy}}
\newcommand{\graph}{\mathsf{graph}}
\newcommand{\MEWOcov}{\mathsf{MEWO}_{\mathsf{cov}}}
\newcommand{\Pizop}{\Pi^0}
\newcommand{\Piz}[2]{\Pizop\,#1\,#2}
\newcommand{\Sigmazop}{\Sigma^0}
\newcommand{\Sigmaz}[2]{\Sigmazop\,#1\,#2}
\newcommand{\Idzop}{\mathsf{Id}^0}
\newcommand{\Idz}[3]{\Idzop\,#1\,#2\,#3}
\newcommand{\Coprodz}[2]{#1\,+^0\,#2}
\newcommand{\emptyz}{\TTempty^0}
\newcommand{\unitz}{\TTunit^0}
\newcommand{\boolz}{\TTbool^0}
\newcommand{\natz}{\TTnat^0}
\newcommand{\sucz}{\mathsf{suc}^0}
\newcommand{\Vzcode}[1]{\Vz_{#1}\mbox{-}\mathsf{code}}
\newcommand{\funz}[2]{#1 \to^0 #2}
\newcommand{\prodz}[2]{#1 \times^0 #2}
\newcommand{\fibopz}{\mathsf{fib}^0}
\newcommand{\fibz}[2]{\fibopz\,#1\,#2}
\newcommand{\Ob}{\mathsf{Ob}}
\newcommand{\Hom}{\mathsf{Hom}}
\newcommand{\Vcat}{\mathcal{V}}
\newcommand{\id}{\mathsf{id}}
\newcommand{\hSetcat}[1]{\mathcal{hSet}_{#1}}
\newcommand{\hSetcatU}{\hSetcat{\U}}
\definecolor{dkblue}{rgb}{0,0.1,0.5}
\definecolor{lightblue}{rgb}{0,0.5,0.5}
\definecolor{dkgreen}{rgb}{0,0.6,0}
\definecolor{dkbrown}{rgb}{0.4,0,0}
\definecolor{dkviolet}{rgb}{0.6,0,0.8}
\newcommand{\mycomment}[3]{}
\begin{document}

\lefttitle{The Category of Iterative Sets in HoTT/UF}

\righttitle{Mathematical Structures in Computer Science}

\papertitle{Article}

\jnlPage{1}{00}
\jnlDoiYr{2019}
\doival{10.1017/xxxxx}

\title{The Category of Iterative Sets in Homotopy Type Theory and Univalent Foundations}

\begin{authgrp}
  \author{Daniel Gratzer}
  \affiliation{Department of Computer Science, Aarhus University, Denmark \email{gratzer@cs.au.dk}}
  \author{Håkon Gylterud}
  \affiliation{Department of Informatics, University of Bergen, Norway \email{hakon.gylterud@uib.no}}
  \author{Anders Mörtberg}
  \affiliation{Department of Mathematics, Stockholm University, Sweden \email{anders.mortberg@math.su.se}}
  \author{Elisabeth Stenholm}
  \affiliation{Department of Informatics, University of Bergen, Norway \email{elisabeth.stenholm@uib.no}}
\end{authgrp}

\history{(Received xx xxx xxx; revised xx xxx xxx; accepted xx xxx xxx)}

\begin{abstract}
  When working in Homotopy Type Theory and Univalent Foundations, the
  traditional role of the category of sets, $\mathcal{Set}$, is
  replaced by the category $\mathcal{hSet}$ of homotopy sets (h-sets);
  types with h-propositional identity types. Many of the properties of
  $\mathcal{Set}$ hold for $\mathcal{hSet}$ ((co)completeness,
  exactness, local cartesian closure, etc.). Notably, however, the
  univalence axiom implies that $\Ob\,\mathcal{hSet}$ is not itself an
  h-set, but an h-groupoid. This is expected in univalent foundations,
  but it is sometimes useful to also have a stricter universe of sets,
  for example when constructing internal models of type
  theory. In this work, we equip the type of iterative sets $\Vz$, due
  to \citet{Gylterud18} as a refinement of the pioneering work of
  \citet{Aczel78} on universes of sets in type theory, with the
  structure of a Tarski universe and show that it satisfies many of
  the good properties of h-sets. In particular, we organize $\Vz$ into
  a (non-univalent strict) category and prove that it is locally
  cartesian closed. This enables us to organize it into a category with
  families with the structure necessary to model extensional type
  theory internally in HoTT/UF. We do this in a rather minimal
  univalent type theory with W-types, in particular we do not rely on
  any HITs, or other complex extensions of type theory.
  Furthermore, the construction of $\Vz$ and the
  model is fully constructive and predicative, while still being very
  convenient to work with as the decoding from $\Vz$ into h-sets
  commutes definitionally for all type constructors. Almost all of the paper
  has been formalized in \Agda using the \agdaunimath library of
  univalent mathematics.
\end{abstract}

\begin{keywords}
  Set universes, Internal models, Univalent Foundations, Type Theory.
\end{keywords}

\maketitle

\paragraph*{Competing interests:}
The authors declare no competing interests.

\paragraph*{Acknowledgments:}
Anders would like to thank the late Vladimir Voevodsky for interesting
him in the problem of set universes of sets in Univalent
Foundations. In the last email Vladimir sent to Anders (on August 17,
2017) he asked about progress on this problem and mentioned that he
had had a new idea for its solution. Unfortunately Vladimir never
replied with any details about the solution, but we hope that he
would have been pleased with the solution in this paper.

This work is also a testament to influence of the late Peter Aczel,
who already in the late 1970s laid the foundations of this work.
We are also indebted to the late Erik Palmgren, whose research,
teaching and encouragement shaped the constructions herein.

The authors would like to thank Peter LeFanu Lumsdaine for valuable
comments and discussions, and for stating our main result before we
had done so ourselves. We are also very grateful to Henrik Forssell
for helping us get a copy of the relevant chapter of the 1984 master
thesis of \citet{Salvesen84}, which only seems to exist in hard copy
at the University of Oslo library.

Anders Mörtberg has been supported by the Swedish Research Council
(Vetenskapsrådet) under Grant No.~2019-04545.

\section{Introduction}
\label{sec:intro}

Foundational theories of mathematics are concerned with collections of
mathematical objects. Depending on the specific foundation, these collections
might be called sets, classes or types. Among the many schisms of foundational
theories, we find the one between \emph{material} and \emph{structural}. In a
material foundational theory, the objects within a collection have an identity
independent of the collection, and it is a sensible question to compare elements
of different collections by equality. On the other hand, in a structural theory,
the elements of a collection have no identity separate from the collection, and
the important aspects of a collection are how its structure interacts with the
other collections, for instance through functional relations.

Traditional set theories, such as Zermelo--Fraenkel set theory (ZF), are
material foundational theories: there is a global elementhood relation and a
global identity relation, meaning that all objects of the theory are possible
elements of any set and can be compared to any other elements. This gives each
set an inherent structure of membership relations between its elements, the
elements of its elements, and so on. On the other hand, intensional Martin-Löf
type theory (MLTT) \citep{MartinLof75itt} is a structural theory where the
identity type compares only elements of the same type. Furthermore, in
\emph{Homotopy Type Theory and Univalent Foundations}\footnote{We will refer to
the book \emph{Homotopy Type Theory: Univalent Foundations of Mathematics}
\citep{HoTT13} as the ``\citetalias{HoTT13}'' throughout the rest of the paper.}
(HoTT/UF) \citep{HoTT13} the Univalence Axiom (UA) \citep{Voevodsky10cmu} can be
seen in structural terms as saying that \emph{structural equivalence is
identity} \citep{Awodey13}. HoTT/UF also distinguishes its types into
\emph{h-levels}/\emph{$n$-types}: contractible types, h-proposition, h-sets,
h-groupoids, and so on \citep{Voevodsky15unimath}. The h-sets correspond to sets
as realized by other structural set theories, while types of higher h-levels are
(higher-dimensional) groupoids which are not primitive objects in other
foundational theories.

In type theory, the types are organized into universes, and UA is formulated
relative to a specific universe. Thus, one can have both \emph{univalent} and
\emph{non-univalent} universes living side by side. Univalence of a universe is
mostly a positive feature: since every definable operation respects equality,
structures can be transported along equivalences using univalence. One immediate
observation is that in a univalent universe containing at least the booleans,
the subuniverse of all h-sets in that universe cannot itself be an h-set.
However, there are situations where it would be useful to have \emph{a family of
h-sets which itself is an h-set}. One such situation is when constructing the
set model of type theory, as for example a category with families (CwF)
\citep{Dybjer96}, within HoTT/UF. The natural way of doing this would be to
start with a univalent universe $\U$ and attempt to equip the corresponding
category of h-sets ($\hSetcatU$) with a CwF structure. Part of the structure of
a CwF is a presheaf $\mathsf{Ty}$, which is usually formalized in HoTT/UF as a
contravariant functor from the category into h-sets. The objects of the source
category are thought of as contexts and the $\mathsf{Ty}$-functor specifies what
the types are in a given context. The natural choice when organizing $\hSetcatU$
into a CwF would be to let $\mathsf{Ty}(\Gamma) := \Gamma \to \mathsf{hSet}_\U$.
Informally, the types in context $\Gamma$ are simply families of h-sets (in $\U$)
over $\Gamma$\!.  However, since $\mathsf{hSet}_\U$ is not an h-set, this is
ill-typed.

The agenda of this paper is to explore how one specific choice of a cumulative
hierarchy of h-sets, namely the hierarchy $\Vz$ as defined by
\citet{Gylterud18}, can be used as a (non-univalent) universe in HoTT/UF. In
particular, we will study the structural and categorical properties of this
inherently material structure and use it as the basis for a CwF structure.
$\Vz$ is a good starting point for our investigation into internal models of
type theory since its construction uses only elementary type-formers: Π-types,
Σ-types, W-types and identity types. In particular, neither the type $\Vz$
itself nor the ∈-relation defined on it require higher-inductive types,
truncations or quotients. Since $\Vz$ is an h-set, and it is closed under the
usual type formers, it assembles into a model of MLTT with uniqueness of
identity proofs and function extensionality, constructed within MLTT+UA. In this
work, UA plays an essential role. We use it to, for instance, characterize the
identity type of $\Vz$\!. Using UA can sometimes result in constructions which
lack the nice computational properties one has in bare MLTT. In our case
however, since $\Vz$ itself is built from elementary type-formers, many of the
crucial equations, such as the ones for decoding type formers in $\Vz$\!, hold
definitionally. This makes $\Vz$ extremely ergonomic from a formalization
perspective.

Indeed, almost all of this paper has been formalized in the proof assistant \Agda
\citep{Agda}---a dependently typed programming language where one can construct
both programs and proofs using the same syntax. Throughout the paper the \Agda
logo,~\AgdaHen, next to a result is a clickable link to the \Agda code for that
result.  For basic results and constructions in HoTT/UF, we have used the
\agdaunimath library \citep{agda-unimath}---a large \Agda library of formalized
mathematics from the univalent point of view.  Our formalization is in many
places more general than the results presented in this paper as many
constructions used here have a generalization to higher h-levels, and it is
these generalized constructions that have been formalized. They are used for the
Univalent Material Set Theory developed by \citet{GylterudStenholm23}.

\subsection{Formal meta-theory and assumptions}

While our formalization has been carried out in \Agda on top of
\agdaunimath, the results in this paper can be obtained in a more
modest type theory, and are modular in the sense that if you
strengthen the underlying type theory with more types, such as
quotients or more universes, these will be reflected in the internal
model. The majority of the results assume that one works in MLTT
extended with UA. By ``MLTT'' we take an intensional version of MLTT
with the same types and type formers as in \citep[Table 2]{HAN},
namely:

\begin{itemize}
 \item Π-types, denoted $∏_{x:A}B(x)$ with application denoted by juxtaposition and λ-abstraction by $λ(x:A).b(x)$.
 \item Σ-types, denoted $Σ_{x:A}B(x)$ with projections $\TTfst$ and $\TTsnd$.
 \item W-types, denoted $\W{x:A}B(x)$ with canonical elements $\supc{A}{f}$.
 \item Identity types, denoted $a=a'$\!, sometimes subscripted $a=_Aa'$ for clarity, with reflective elements $\refl_a : a = a$.
 \item Binary sum types, denoted $A+B$ with injections $\TTinl$ and $\TTinr$.
 \item Base types: $\TTempty$, $\TTunit$, $\TTbool$, ℕ, with $\TTtt$ being the canonical
   element of $\TTunit$, $\TTtrue, \TTfalse$ the elements of $\TTbool$, and elements of ℕ denoted by $\TTnatzero$ and $\TTnatsuc\,n$. We also let $\TTFin{n}$ denote the type with $n$ elements.
 \item Universes, denoted $\U$, closed under the aforementioned type formers.
       For constructions needing more than one universe level,
       we will subscript them $\U_0, \U_1, ⋯, \U_\ell, ⋯$.
\end{itemize}

One important difference to \citep{HAN} though is that we of course do
not assume equality reflection and instead have intensional identity
types as in \citep{MartinLof75itt}. Another difference is that we, for
convenience, assume definitional $\eta$ for $\Sigma$-types. Our system
is hence very similar to Martín Escardó's spartan MLTT
\citep{Escardo19} and the basic system used in \UniMath
\citep{UniMath}, but with the addition of W-types. The only
construction going beyond this is the construction of set quotients in
Section \ref{sec:vsetquotients}, which assumes that the universe has
set quotients.  The construction of subuniverses in Section
\ref{sec:vuniverses} also naturally assumes that the starting universe
has subuniverses as well. But even with these extensions, the
development is completely constructive and predicative, in particular
we do not rely on LEM, AC, or any resizing principles~\citep{Voevodskyresizing}.

For convenience, we also rely on definitions and notational conventions from the
\citetalias{HoTT13}. Among these are:

\begin{itemize}
 \item Definitional/judgmental equality is denoted by ${≡}$.
 \item Homotopy of functions is denoted $f ∼ g$, with $\reflhtpy$ denoting $λa.\refl$
 \item Type equivalence is denoted $A ≃ B$ with identity equivalence
       $\idequiv : A ≃ A$.
 \item h-levels/$n$-types, in this paper we mainly work with types in $\TThPropU$ and $\TThSetU$, i.e.\ the h-propositions and
   h-sets in a given universe.
 \item We use pattern-matching freely in definitions and proofs,
   instead of explicit eliminators.
\end{itemize}

\subsection{Contributions of the paper}

While Section \ref{sec:vdef} sets the stage by recounting the definition of
$\Vz$\!, the foundation of this paper's contributions is built in Section
\ref{sec:vtarski}: we show that $\Vz$ forms a Tarski-style universe closed under
Π-types, Σ-types, identity types, coproducts, set quotients, and that it contains
basic types like $\TTempty$, $\TTunit$, $\TTbool$, ℕ and a hierarchy of subuniverses.
Proposition \ref{prop:equiv-emb-int} is central to this, as it characterizes the
small types representable in $\Vz$ as those which can be embedded into it. Since
the decoding of all type formers is definitional, this gives an ergonomic universe
of h-sets which itself is an h-set, which can be used in HoTT/UF.
In Section \ref{sec:vcategory} we shed light on the categorical properties of
$\Vz$\!. In particular, we show that it is a locally cartesian closed category,
with finite limits and colimits, and that there is a full and faithful functor
back to $\hSetcatU$ which preserves this structure.
The final technical contribution is the construction of an extensional model of
MLTT internal in MLTT+UA, based on $\Vz$\!. This is done by giving CwF structure
to $\Vz$\!. The formalization of this includes contributions to \agdaunimath,
in particular the definition of a CwF with associated structure.
A bibliographic contribution can be found in Section \ref{sec:otherv},
where we compare our constructions to existing developments on the
relationship between set theory and type theory. This relationship
has taken many forms over the years, and goes back to the 1970s.

\section{Definition of \texorpdfstring{$\Vz$}{V0} and its basic properties}
\label{sec:vdef}

The ideas behind $\Vz$ trace back to \emph{The type theoretic interpretation of constructive set
  theory} by \citet{Aczel78}. In \emph{op.\ cit.}, \citeauthor{Aczel78} constructed a model of set
theory in dependent type theory relying upon a non-trivial defined equality relation on the
underlying type of the model in order to (hereditarily) force set-extensionality. This underlying
type of the model is what we in modern parlance would call a $\Wop$-type.
To construct $\Vz$\!, we opt to carve out a subtype of a $\Wop$-type rather than take such a
quotient. Instead of defining an equivalence relation which identifies the elements of the
$\Wop$-type which represent the same set, we shall identify a subtype of the $\Wop$-type which
contains only the canonical representations of each (iterative) set. Thus, we get a model of set
theory in type theory where the equality \emph{is} interpreted as the identity type and no further
non-trivial  identifications are required.

In this section we will review the definition of $\Vz$ and prove some
properties about it. In particular, we will show that $\Vz$ is an
h-set. In order to define $\Vz$\!, we start by recalling the $\Wop$-type Aczel
used: ``the unrestricted iterative hierarchy''. It is the type of well-founded trees
with branching types chosen freely from a fixed universe $\U_\ell$.

\begin{definition}[\agdalink{https://elisabeth.stenholm.one/category-of-iterative-sets/iterative.set.html\#2146}]
  Given a universe $\U_\ell$, we define the type $\Vinf_\ell$ as
  \[
    \Vinf_\ell := \W{A : \U_\ell}{A}
  \]
\end{definition}

We will usually omit the universe level $\ell$ for $\U_\ell$ and $\Vinf_\ell$, and
write simply $\U$ and $\Vinf$\!.
When seeing $\Vinf$ as a type of sets, an element
$\supc{A}{f} : \Vinf$ represents a set whose elements are indexed by
the type $A : \U$. The function $f : A → \Vinf$ picks out the
element at each index.
Since the function $f$ need not be injective,
the same element can be picked out several times. Indeed, the role of Aczel's equivalence
relation on this type was to erase this multiplicity. If we instead omit this further
identification $\Vinf$ can be seen as a type of multisets~\citep{Gylterud20}.

\begin{notation}
  Given $x : \Vinf$, we follow \citet{Aczel78} and define a pair of operations
  $\overline{x} : \U$ and $\widetilde x : \overline x → \Vinf$, as follows:
  \begin{mathpar}
    \overline{\supc{A}{f}} \defeq A
    \and
    \reallywidetilde{\supc{A}{f}} \defeq f
  \end{mathpar}
\end{notation}

We present two characterizations of equality in $\Vinf$ as both are
useful in different contexts. We note that both characterizations rely on univalence.

The first is an instance of a more general characterization of equality in $\Wop$-types
\citep[Lemma~1]{Gylterud20}. It states that two elements are equal if they have equivalent
underlying indexing types and this equivalence is coherent with respect to the functions picking out
the elements.

\begin{therm}[{{\citep[Theorem 1]{Gylterud20}}}, \agdalink{https://elisabeth.stenholm.one/category-of-iterative-sets/iterative.set.html\#2806}]\label{thm:Vinf-eq}
  For two elements $x,y : \Vinf$ the canonical map
  \[
    \left(x = y\right) → \left(∑_{e :\,\overline{x} ≃ \overline{y}} \widetilde{x} ∼ \widetilde{y} ∘ e\right)
  \]
  which sends $\refl$ to $(\idequiv, \reflhtpy)$, is an equivalence.
\end{therm}

The second characterization of equality in $\Vinf$ states that two
elements in $\Vinf$ are equal when the functions picking out the
elements are fiberwise equivalent. Intuitively, this means that they
pick out the same elements the same number of times. One can think of
this characterization of equality as a higher level generalization of
the axiom of extensionality.

\begin{therm}[{{\citep[Theorem 2]{Gylterud20}}}, \agdalink{https://elisabeth.stenholm.one/category-of-iterative-sets/e-structure.from-T-coalgebra.html\#1124}]\label{thm:ext}
  For two elements $x,y : \Vinf$ the canonical map
  \[
    \left(x = y\right) → ∏_{z : \Vinf} \fib{\widetilde{x}}{z} ≃ \fib{\widetilde{y}}{z}
  \]
  which sends $\refl$ to $λz.\idequiv$, is an equivalence.
\end{therm}
\begin{proof}
  We reproduce the proof here for convenience. We have the following chain of
  equivalences:
  \begin{align*}
    \left(x = y\right)
    ≃ \left(∑_{e :\,\overline{x}\,≃\,\overline{y}} \widetilde{x} ∼ \widetilde{y} ∘ e\right)
    ≃ \left(∏_{z : \Vinf} \fib{\widetilde{x}}{z} ≃ \fib{\widetilde{y}}{z}\right)
  \end{align*}

  The first equivalence is the one constructed in Theorem \ref{thm:Vinf-eq}. The second equivalence
  is proven by \citet[Lemma 5]{Gylterud20}. One directly checks that the constructed equivalence
  computes as desired for $\refl$.
\end{proof}

We will not dwell much on our structures being models of material set theory,
but rather focus on their structural properties in this paper.  However, we will
define the elementhood relation on $\Vinf$ following \citet{Gylterud20}. This
elementhood relation, and its well-foundedness, will be used in later
constructions.

\begin{definition}[Elementhood, \agdalink{https://elisabeth.stenholm.one/category-of-iterative-sets/e-structure.from-T-coalgebra.html\#1047}]
  We define $∈\,:\,\Vinf → \Vinf → \U$ by
  \[
    x ∈ y := \fib{\widetilde{y}}{x}
  \]

  In particular, for canonical elements we get
  \[
    \left(x ∈ \supc{A}{f} \right) ≡ \left(∑_{a:A}f\,a=x\right)
  \]
\end{definition}

The exensionality property of Theorem \ref{thm:ext} can now be reformulated
as an equivalence
\[
  \left(x = y\right) ≃ \left(∏_{z:\Vinf} z ∈ x ≃ z ∈ y\right)
\]
By virtue of univalence, we obtain this extensionality result without taking
quotients by set extensionality or bisimulation like Aczel does. In particular,
we are able to avoid working with quotients or setoids while still achieving the
equivalence above.

Note that $x ∈ y$ need not be an h-proposition, i.e., $y$ could contain several instances of
$x$. This is because, as discussed above, there is no restriction on the function $\widetilde{y}$
and its fibers. We will soon focus our attention to a subtype of $\Vinf$ where these fibers are
h-propositions, i.e., where they have at most one inhabitant. But first, we will look at how some
familiar sets can be represented in $\Vinf$\!.

We define the empty set as follows:
\[
  ∅ := \supc{\TTempty}{\TTemptyelim}
\]
This represents the empty set since for any $x : \Vinf$, the type $x ∈ ∅$
is empty.

Given $x : \Vinf$ we can construct the singleton containing $x$ as follows:
\[
  \{ x \} := \supc{\TTunit}{(λ\_.x)}
\]
The type $x ∈ \{ x \}$ is inhabited by $(\TTtt,\refl)$. Indeed, for any $y :
\Vinf$, there is an equivalence $\left(y ∈ \{ x \} \right) ≃ \left(y = x\right)$.

We can also construct the unordered pair of two elements $x, y : \Vinf$:
\[
  \{ x, y \} := \supc{\TTbool}{(λb.\TTifthenelse{b}{x}{y})}
\]
For any $z : \Vinf$, the type $z ∈ \{x,y\}$ is equivalent to $\TTCoprod{(z =
x)}{(z = y)}$. Note in particular that the type $x ∈ \{x,x\}$ is equivalent to
$\TTCoprod{(x = x)}{(x = x)}$, which contains at least two distinct elements. Thus,
$\{x,x\}$ is a multiset which contains two copies of $x$. Using images
one can whittle this down to an iterative set, see the forthcoming paper
\citep{GylterudStenholm23} for details on the various types of pairing in
higher h-levels.

In order to construct a universe of sets we need to ensure that the
$∈$-relation is h-proposition valued, i.e., that any element occurs at most
once in a set. As the type $x ∈ y$ is the type of homotopy fibers of
$\widetilde{y}$ over $x$, this type would be an h-proposition if $\widetilde{y}$
was an embedding:

\begin{definition}[{{\citepalias[Definition 4.6.1]{HoTT13}}}, \agdalink{https://elisabeth.stenholm.one/category-of-iterative-sets/foundation-core.embeddings.html\#1086}]
  A function $f : A → B$ is an \emph{embedding} if
  $\ap\,f\,x\,y : x = y → f\,x = f\,y$ is an equivalence for all
  $x\,y : A$.
\end{definition}

We write $\TTisemb\,f$ for the type of proofs that $f$ is an embedding and $f :
A ↪ B$ for ${∑}_{f : A → B}\,\TTisemb\,f$. A key
observation about embeddings is:

\begin{lemma}[{{\citepalias[Lemma 7.6.2]{HoTT13}}}, \agdalink{https://elisabeth.stenholm.one/category-of-iterative-sets/foundation-core.propositional-maps.html\#2019}]\label{lem:embedding-propfiber}
  A function $f : A → B$ is an embedding if and only if it has h-propositional fibers.
\end{lemma}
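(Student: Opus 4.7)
The plan is to bridge the definition of embedding with the propositional-fiber condition via the classical ``total space trick'': for fixed $a_1 : A$, the function $\ap\,f : (a_1 = a_2) \to (f\,a_1 = f\,a_2)$ is a fiberwise map over $a_2$, and it is a fiberwise equivalence if and only if the induced map on total spaces
\[
  \left(\sum_{a_2 : A} a_1 = a_2\right) \to \left(\sum_{a_2 : A} f\,a_1 = f\,a_2\right)
\]
is an equivalence. The domain is a based path space and hence contractible with center $(a_1, \refl)$, while the codomain is definitionally $\fib{f}{f\,a_1}$. Therefore, asking that $\ap\,f$ be an equivalence for all $a_2$ (with $a_1$ fixed) is equivalent to asking that $\fib{f}{f\,a_1}$ be contractible.

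For the forward direction, assume $f$ is an embedding. By the observation above, $\fib{f}{f\,a}$ is contractible for every $a : A$, so in particular an h-proposition. For an arbitrary $b : B$, recall that being an h-proposition is equivalent to ``inhabited implies contractible''. Given any element $(a, p) \in \fib{f}{b}$, the path $p : f\,a = b$ induces an equivalence $\fib{f}{b} \simeq \fib{f}{f\,a}$ (by transport along $p$), and the latter is contractible; so $\fib{f}{b}$ is contractible whenever inhabited, hence propositional.

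For the converse, assume that all fibers of $f$ are h-propositions. Then in particular $\fib{f}{f\,a_1}$ is an h-proposition, and it is inhabited by $(a_1, \refl)$, so it is contractible. By the total space equivalence, $\ap\,f : (a_1 = a_2) \to (f\,a_1 = f\,a_2)$ is an equivalence for every $a_2$, so $f$ is an embedding. There is no real obstacle here; the only care needed is in the total-space step, where one should cite the standard result that a fiberwise map is a fiberwise equivalence iff the induced map on total spaces is an equivalence, and in identifying the transport action along $p$ with the reindexing equivalence between $\fib{f}{b}$ and $\fib{f}{f\,a}$.
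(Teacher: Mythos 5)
Your proof is correct, and it is essentially the standard argument: the paper does not prove this lemma itself but cites it as \citepalias[Lemma 7.6.2]{HoTT13}, whose proof proceeds by exactly your route --- the fiberwise-map/total-space equivalence, contractibility of the based path space, and the ``inhabited implies contractible'' characterization of h-propositions. The only nitpick is that $\sum_{a_2 : A} (f\,a_1 = f\,a_2)$ agrees with $\fib{f}{(f\,a_1)}$ only up to path inversion (depending on the orientation convention for fibers), so ``definitionally'' should be ``up to an evident equivalence''; this does not affect the argument.
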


This motivates Gylterud's definition of iterative sets in HoTT/UF \citep{Gylterud18}:

\begin{definition}[Iterative sets, \agdalink{https://elisabeth.stenholm.one/category-of-iterative-sets/iterative.set.html\#3053}]
  We define $\TTisiterative : \Vinf → \U$ as
  \[
    \TTisiterative\,(\supc{A}{f}) := (\TTisemb\,f) \times \left( ∏_{a : A} \TTisiterative\,(f\,a) \right)
  \]
\end{definition}

The idea is to pick out those elements $x : \Vinf$ for which the function that selects elements is
an embedding and such that the elements of $x$ satisfy the same criterion, recursively. This means
that any $y : \Vinf$ element is a member of $x$ at most once and, consequently, $x$ encode a set
rather than a multiset. For these sets the $∈$-relation becomes h-proposition valued by
Lemma~\ref{lem:embedding-propfiber}, as desired.

Not all the elements in $\Vinf$ are iterative sets. For example, the
unordered pair $\{ ∅, ∅ \}$ from above is \emph{not} an iterative set
as the function in the definition is not an embedding.\footnote{There is
  a different way to construct pairs which does yield an iterative set when
  applied to iterative sets. For
  details, see the proof of the the axioms of Myhill's constructive set
  theory given by \citet{Gylterud18}.}
On the other hand, the empty set, $∅$, is an iterative set, since $\TTemptyelim$
is always an embedding, regardless of the codomain. Moreover, for any iterative
set $x : \Vz$, the singleton $\{x\}$ is an iterative set since any map from an
h-proposition into an h-set is an embedding (we will see that $\Vz$ is an h-set
in Theorem~\ref{thm:Vzhset}). Furthermore, if $x$ and $y$ are distinct iterative
sets then $\{ x, y \}$ is also an iterative set. To see this, it suffices to
verify that the below map $φ : \TTbool → \Vz$ is an embedding if it is injective:
\[
  φ \, b := \TTifthenelse{b}{x}{y}
\]
Given $b_1, b_2 : \TTbool$,
either $b_1 = b_2$, in which case we are done, or $b_1 \neq b_2$,
in which case we get a path between $x$ and $y$,
from which the result follows by assumption.

\begin{definition}[Type of iterative sets, \agdalink{https://elisabeth.stenholm.one/category-of-iterative-sets/iterative.set.html\#3823}]
  \label{def:v}
  We define the type of \emph{iterative sets} as follows:
  \[
    \Vz := {∑}_{x : \Vinf}\,\TTisiterative\,x
  \]
\end{definition}

We will extend the previously introduced notation to apply to iterative sets:
\[
  \overline{(\supc{A}{f}, p)} := A \qquad \reallywidetilde{(\supc{A}{f}, p)} := f
\]
Moreover, the elementhood relation $∈$ defined on $\Vinf$ gives rise to an elementhood relation for
$\Vz$ given by projecting out the underlying elements in $\Vinf$ and applying $∈$: for any
$x,y : \Vz$ we let $x ∈ y := \TTfst \, x ∈ \TTfst \, y$.  We use the same notation for both
relations, as it will be clear from context which one is meant.

\begin{lemma}[\agdalink{https://elisabeth.stenholm.one/category-of-iterative-sets/iterative.set.html\#3205}]
  For all $x : \Vinf$, the type $\TTisiterative\,x$ is an h-proposition.
\end{lemma}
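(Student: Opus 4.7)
The plan is to prove this by induction on $x : \Vinf$, i.e., by W-induction. Since $\Vinf$ is a W-type with constructor $\supop$, it suffices to show that $\TTisiterative\,(\supc{A}{f})$ is an h-proposition under the inductive hypothesis that $\TTisiterative\,(f\,a)$ is an h-proposition for every $a : A$.

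Unfolding the definition, $\TTisiterative\,(\supc{A}{f})$ is the product of two types: $\TTisemb\,f$ and $\prod_{a : A}\TTisiterative\,(f\,a)$. I would handle these two conjuncts separately. For the first, I would appeal to the standard fact that being an embedding is an h-proposition: this follows from Lemma \ref{lem:embedding-propfiber} together with the fact that being h-propositional is itself h-propositional, and having h-propositional fibers is a $\Pi$-type of h-propositions. For the second, I would use the inductive hypothesis together with the standard closure of h-propositions under dependent products (which is a consequence of function extensionality, and hence available in MLTT+UA). Finally, the binary product of h-propositions is an h-proposition, completing the inductive step.

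The main work is really just setting up the W-induction correctly and invoking the closure properties in the right order; there is no genuine obstacle, since the definition of $\TTisiterative$ was designed precisely so that each conjunct is manifestly propositional under the inductive hypothesis. The only subtle point worth mentioning is that one must perform the induction on $x$ rather than trying to reason pointwise, since $\TTisiterative$ is defined by recursion on the W-type structure and unfolds only at canonical elements $\supc{A}{f}$; but the induction principle for W-types immediately reduces the general case to this canonical form.
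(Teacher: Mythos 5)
Your proposal is correct and matches the paper's proof, which is exactly a W-induction on $x$ combined with the fact that being an embedding is an h-proposition and the closure of h-propositions under (dependent) products. The paper states this in one line; you have simply spelled out the same argument in more detail.
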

\begin{proof}
  This follows by induction on $x : \Vinf$, together with the fact that being an
  embedding is an h-proposition.
\end{proof}

\begin{corollary}[\agdalink{https://elisabeth.stenholm.one/category-of-iterative-sets/iterative.set.html\#4044}]
  The projection $\TTfst : \Vz → \Vinf$ is an embedding, i.e.\ $\Vz$ is a subtype of $\Vinf$\!.
\end{corollary}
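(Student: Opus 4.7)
The plan is to apply Lemma~\ref{lem:embedding-propfiber}, which reduces the claim that $\TTfst : \Vz \to \Vinf$ is an embedding to showing that each of its fibers is an h-proposition. So I first want to compute, up to equivalence, the fiber $\fib{\TTfst}{x}$ for a given $x : \Vinf$.

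By definition, $\fib{\TTfst}{x} \defeq \sum_{(y, p)\,:\,\Vz}\TTfst\,(y,p) = x$, which unfolds to $\sum_{y : \Vinf}\sum_{p : \TTisiterative\,y}\,y = x$. Reordering the Σ-types (using the usual equivalence $\sum_{y}\sum_{p}\,y = x \simeq \sum_{y}\sum_{q :\,y = x}\,\TTisiterative\,y$) and then contracting the singleton $\sum_{y : \Vinf}\,y = x$, we obtain an equivalence $\fib{\TTfst}{x} \simeq \TTisiterative\,x$. This is an instance of the general fact that the first projection from a Σ-type has fibers equivalent to the second component of the Σ.

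By the preceding lemma, $\TTisiterative\,x$ is an h-proposition, hence so is $\fib{\TTfst}{x}$. Applying Lemma~\ref{lem:embedding-propfiber} finishes the proof.

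There is no real obstacle here; the only bookkeeping issue is making sure the reordering and singleton-contraction of Σ-types is carried out correctly, but this is entirely routine in HoTT/UF and is already available in \agdaunimath as the general statement that $\TTfst$ out of a family of h-propositions is an embedding.
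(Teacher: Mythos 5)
Your proof is correct and takes essentially the same route as the paper: the paper simply invokes the general fact that the first projection out of a $\Sigma$-type over a family of h-propositions is an embedding (together with the preceding lemma that $\TTisiterative\,x$ is an h-proposition), whereas you additionally unfold the proof of that general fact by computing $\fib{\TTfst}{x} \simeq \TTisiterative\,x$ and applying Lemma~\ref{lem:embedding-propfiber}. No gap; the extra detail is just the standard singleton-contraction argument the paper leaves implicit.
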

\begin{proof}
  This is an instance of the fact that for any type $A$ and family $P$ of
  h-propositions over $A$, the first projection $\TTfst : ∑_{a : A} P\,a → A$
  is an embedding.
\end{proof}

Having an embedding $\Vz↪\Vinf$ means that equality in $\Vz$ is exactly equality of
the corresponding elements in $\Vinf$\!. Since we have already characterized equality
in $\Vinf$, we can use this characterization to show that $\Vz$ is an h-set.

\begin{therm}[\agdalink{https://elisabeth.stenholm.one/category-of-iterative-sets/iterative.set.html\#17196}]\label{thm:Vzhset}
  $\Vz$ is an h-set.
\end{therm}
\begin{proof}
  For $(x,p), (y,q) : \Vz$ we have a chain of equivalences:
  \[
    \left((x,p) =_{\Vz} (y,q)\right)
    ≃ \left(x =_{\Vinf} y\right)
    ≃ \left(∏_{z : \Vinf} z ∈ x ≃ z ∈ y\right)
  \]
  The first equivalence is the characterization of equality in subtypes. The
  second is Theorem \ref{thm:ext}. Note that $z ∈ x \equiv
  \fib{\widetilde{x}}{z}$, and $\widetilde{x}$ is an embedding by $p$. Thus $z
  ∈ x$ is an h-proposition. The same holds for $z ∈ y$. Thus, the rightmost
  type in the chain of equivalences above is a family of equivalences between
  h-propositions, and is thus an h-proposition. It then follows that the type
  $(x,p) =_{\Vz} (y,q)$ is an h-proposition.
\end{proof}

Given a type $A : \U$ and an embedding $f : A ↪ \Vz$, we can
construct an element of $\Vz$\!. This function is the counterpart to $\supop$ for
$\Vinf$, and while it is not formally a constructor it behaves like one in that
the recursion and elimination principles, with fitting computation rules,
are provable for it \citep{GylterudStenholm23}.

\begin{remark}
The underscores in the constructions below denote proof terms for the h-propositions involved. We
omit these for readability, and refer the interested reader to the formalization.
\end{remark}

\begin{definition}[\agdalink{https://elisabeth.stenholm.one/category-of-iterative-sets/iterative.set.html\#5749}]
  We define the following function:
  \begin{align*}
    &\supz : \left(∑_{A : \U} A ↪ \Vz\right) → \Vz \\
    &\supz\,(A,f) := \left(\supc{A}{(\pi_0 ∘ f)}, \_\right)
  \end{align*}

\end{definition}

Similarly, given an element of $\Vz$, we can extract the underlying type and
embedding.

\begin{definition}[\agdalink{https://elisabeth.stenholm.one/category-of-iterative-sets/iterative.set.html\#6288}]
  We define the following function:
  \begin{align*}
    &\desupz : \Vz → \left(∑_{A : \U} A ↪ \Vz\right) \\
    &\desupz\,(\supc{A}{f}, \_) := (A, (f, \_))
  \end{align*}
\end{definition}

By virtue of being a $\Wop$-type, $\Vinf$ is the initial algebra to
the polynomial functor
\[
  X ↦ \left(∑_{A : \U} A → X\right)
\]

Similarly, $\Vz$ is the initial algebra for the functor $X ↦ \left(∑_{A : \U} A
↪ X\right)$, even though this functor is not polynomial. The initiality induces
an equivalence $\Vz ≃ \left(∑_{A : \U} A ↪ \Vz\right)$, realized by the maps
$\supz$ and $\desupz$ above. These results are due to
\citet{GylterudStenholm23}, who extend this construction to a whole hierarchy of
functors $X ↦ \left(∑_{A : \U} A ↪_n X\right)$, for $n : \TTnat_{-1}$. Each of
these have an initial algebra, given by a higher level generalization of $\Vz$\!.

\section{\texorpdfstring{$\Vz$}{V0} as a universe à la Tarski}
\label{sec:vtarski}

The type $\Vz$ can be thought of as a type of material sets, in the sense that
$\Vz$ together with the binary relation $\in$ is a model of
constructive set theory \citep{Gylterud18}. This section demonstrates
that, more type-theoretically, $\Vz$ can be organized into a
universe à la Tarski. In this way, $\Vz$ becomes a universe of h-sets which is itself
an h-set. Furthermore, $\Vz$ is a strict universe in the sense that the decoding
from codes to types is definitional. For instance, the decoding of the code for the natural numbers
is definitionally equal to the type of natural numbers, and the decoding
of a Π- or Σ-type of a family is the actual Π- or Σ-type of the decoding
of the family.

We begin by defining the decoding family for our universe, $\Vz$\!, as the
underlying index type for each of its elements.

\begin{definition}[Decoding, \agdalink{https://elisabeth.stenholm.one/category-of-iterative-sets/iterative.set.html\#6525}]
  We define the decoding function $\Elz : \Vz \to \U$ by
  \[
    \Elz x :=\overline x
  \]
\end{definition}

It is easy to prove that the decoding of each code in $\Vz$ is also an h-set:

\begin{therm}[\agdalink{https://elisabeth.stenholm.one/category-of-iterative-sets/iterative.set.html\#17117}]
  For every $x : \Vz$ the type $\Elz{x}$ is an h-set.
\end{therm}
\begin{proof}
  Recall that $\widetilde x$ embeds $\Elz x$ into $\Vz$\!.
  By Theorem \ref{thm:Vzhset} $\Vz$ is an h-set. Since any type which embeds
  into an h-set is an h-set, it follows that $\Elz x$ is an h-set.
\end{proof}

Note that for any $A : \U$ and embedding $f : A ↪ \Vz$ we have the definitional
equality $\Elz{(\supzc A f)} ≡ A$. That is, if we construct a code for a type in
$\U$ using $\supz$ (which is what we usually do), then the decoding of this
code is definitionally equal to the type we started with. This is very
convenient when working with the universe $\Vz$\!, especially for formalization.

As a universe, $\Vz$ contains codes of all the traditional type formers
as long as they are present in the underlying universe, $\U$. Using $\supz$\!, one
can construct a code for a given type $A : \U$ in $\Vz$ if there is an embedding
$A ↪ \Vz$\!. In fact, there is a code for $A$ in $\Vz$ precisely
when it can be embedded into $\Vz$\!.

\begin{proposition}[\agdalink{https://elisabeth.stenholm.one/category-of-iterative-sets/fixed-point.internalisations.html\#1344}]\label{prop:equiv-emb-int}
  For any $A : \U$ there is an equivalence
  \[
    \left(A ↪ \Vz\right) ≃ \left(\sum_{a : \Vz} \Elz{a} = A\right)
  \]
\end{proposition}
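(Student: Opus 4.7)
The plan is to reduce the statement to an equivalence $\Vz ≃ \sum_{B : \U} (B ↪ \Vz)$, which is already established by the pair $(\supz, \desupz)$, and then perform standard $\Sigma$-type rearrangements together with a singleton contraction. The key observation is that $\Elz\,a$ is (definitionally) the first projection of $\desupz\,a$, so reindexing along $\desupz$ converts the left factor of the fibre into something which we can manipulate.

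Concretely, I would argue via the following chain of equivalences:
\begin{align*}
  \sum_{a : \Vz} \Elz a = A
  &≃ \sum_{(B,f) \,:\, \sum_{B : \U} (B ↪ \Vz)} B = A \\
  &≃ \sum_{B : \U} (B ↪ \Vz) \times (B = A) \\
  &≃ \sum_{(B,p) \,:\, \sum_{B : \U} B = A} (B ↪ \Vz) \\
  &≃ A ↪ \Vz.
\end{align*}
The first step reindexes the $\Sigma$ along the equivalence $\desupz$, using that $\Elz$ is the first projection composed with $\desupz$. The second and third steps are associativity and commutativity of $\Sigma$. The final step uses that $\sum_{B : \U} B = A$ is a singleton based at $A$, and hence contractible, so contracting it leaves only $A ↪ \Vz$.

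The canonical map of the proposition would unfold, under this chain, to the obvious operation $f \mapsto (\supz(A,f), \refl_A)$, which is well-typed by the definitional equality $\Elz(\supz(A,f)) ≡ A$. An equivalent direct proof would provide an inverse $(a,p) \mapsto p_{*}(\widetilde{a})$, transporting the embedding $\widetilde a : \Elz a ↪ \Vz$ along $p$, and then verify the round-trips using $\supz \circ \desupz ∼ \id$ and $\desupz \circ \supz ∼ \id$. I would prefer the rearrangement argument above, however, as it avoids any manual path algebra.

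I do not anticipate a serious obstacle. The only subtlety is ensuring that the definitional unfolding $\Elz\,a \equiv \pi_1(\desupz\,a)$ is used cleanly so that the reindexing step is immediate, and that the h-propositionality of $\TTisemb$ is implicit in treating $A ↪ \Vz$ consistently as $\sum_{f : A \to \Vz}\TTisemb\,f$ throughout. With these in place, each step in the chain is an instance of a standard $\Sigma$-rearrangement or singleton contraction.
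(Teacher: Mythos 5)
Your proof is correct, but it is organized differently from the paper's. The paper writes down explicit quasi-inverses $\alpha\,f := (\supz\,(A,f),\refl)$ and $\beta\,(a,\refl) := \widetilde{a}$ (the latter by path induction on the second component) and verifies both composites by direct computation, leaning on the definitional equality $\Elz{(\supz\,(A,f))} \equiv A$ and on the fact that the witness of $\TTisemb$ lives in an h-proposition and can therefore be ignored. Your version instead assembles the equivalence from off-the-shelf pieces: reindexing along $\desupz : \Vz \simeq \sum_{B:\U}(B \hookrightarrow \Vz)$, associativity and commutativity of $\Sigma$, and contraction of the singleton $\sum_{B:\U} B = A$. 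The mathematical content is the same --- your singleton contraction is the paper's path induction in disguise, and the paper's round-trip computations are absorbed into the statement that $\supz$ and $\desupz$ are mutually inverse --- but the trade-offs differ. Your route avoids all manual path algebra, at the cost of taking the full equivalence $\Vz \simeq \sum_{B:\U}(B \hookrightarrow \Vz)$ as a prerequisite (the paper cites this fact but its direct proof of the proposition does not need it in full: it only checks $\beta\circ\alpha$ on underlying functions and disposes of the embedding witnesses by h-propositionality). The paper's route, conversely, makes it transparent that the canonical map is $f \mapsto (\supz\,(A,f),\refl)$ and that the second component really can be taken to be $\refl$ thanks to the strict decoding, which is precisely the point the authors want to foreground. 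Both are valid; yours is arguably the cleaner formal argument, theirs the more informative one about the computational behaviour of the equivalence.
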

\begin{proof}
  The maps back and forth are
  \begin{align*}
    &α : \left(A ↪ \Vz\right) → \sum_{a : \Vz} \Elz{a} = A \\
    &α\,f := (\supzc{A}{f}, \refl) \\
    \vspace{10pt} \\
    &β : \left(\sum_{a : \Vz} \Elz{a} = A\right) → (A ↪ \Vz) \\
    &β\,(a , \refl) := \widetilde{a}
  \end{align*}

  We compute as follows:
  \begin{gather*}
    α(β(a,\refl)) ≡ α (\widetilde a) ≡ (\supzc{\overline a}{\widetilde a},\refl) = (a,\refl)
    \\
    \TTfst \left(β(α␣f)\right) ≡ \TTfst \left(β(\supzc{A}{f}, \refl)\right) ≡ \TTfst \left(\reallywidetilde{\supzc{A}{f}}\right) ≡ \TTfst\,f
  \end{gather*}

  We emphasize that the definitional equation $\Elz{(\supzc A f)} ≡ A$ simplifies the definition of
  α as we may then use $\refl$ for the second argument. Moreover, $β ∘ α$ definitionally preserves
  the function underlying the embedding. The same is not true of the \emph{witness} that this
  function is an embedding, but such witnesses belong to a contractible type and can safely be ignored.
\end{proof}

\subsection{Basic types}

We now construct codes for some basic types in $\Vz$\!.

\begin{proposition}[\agdalink{https://elisabeth.stenholm.one/category-of-iterative-sets/iterative.set.html\#17436}]\label{prop:empty-unit-booleans}
  $\Vz$ contains the empty type, unit type and booleans.
\end{proposition}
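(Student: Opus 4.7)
The plan is to use Proposition \ref{prop:equiv-emb-int} to reduce each case to exhibiting an embedding from the basic type into $\Vz$, then exploit the discussion of $\emptyset$, $\{x\}$, $\{x,y\}$ given earlier (just before Definition \ref{def:v}) to obtain codes for $\TTempty$, $\TTunit$, $\TTbool$ respectively.

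For $\TTempty$, I would simply observe that $\TTemptyelim : \TTempty \to \Vz$ is an embedding (vacuously, since its domain is empty), so $\supzc{\TTempty}{\TTemptyelim}$ is a code whose decoding is definitionally $\TTempty$. For $\TTunit$, since $\TTunit$ is an h-proposition and $\Vz$ is an h-set by Theorem \ref{thm:Vzhset}, every map $\TTunit \to \Vz$ is automatically an embedding; I would therefore take $\lambda\_.\,\emptyset : \TTunit \to \Vz$, which gives $\supzc{\TTunit}{(\lambda\_.\,\emptyset)}$, i.e.\ the singleton $\{\emptyset\}$, as the code for $\TTunit$.

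For $\TTbool$, both $\TTbool$ and $\Vz$ are h-sets, so it suffices to give an injection $\TTbool \hookrightarrow \Vz$ (as discussed in the text right before Definition \ref{def:v}). I would define $\varphi\,b := \TTifthenelse{b}{\emptyset}{\{\emptyset\}}$ and then check that $\emptyset \neq \{\emptyset\}$: if $\emptyset = \{\emptyset\}$, then by Theorem \ref{thm:ext} (applied at the $\Vinf$-level via the subtype embedding $\Vz \hookrightarrow \Vinf$) we would get an equivalence $(z \in \emptyset) \simeq (z \in \{\emptyset\})$ for every $z$, whereas taking $z \defeq \emptyset$ yields $\TTempty \simeq \TTunit$, a contradiction. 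Hence $\varphi$ is injective, thus an embedding, and $\supzc{\TTbool}{\varphi}$ is the desired code.

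The main (very mild) obstacle is just the separation $\emptyset \neq \{\emptyset\}$; the rest is bookkeeping. The payoff of using Proposition \ref{prop:equiv-emb-int} rather than hand-constructing $\desupz$-style codes is that the decoding equations $\Elzop\,\supzc{\TTempty}{\TTemptyelim} \equiv \TTempty$, $\Elzop\,\supzc{\TTunit}{(\lambda\_.\,\emptyset)} \equiv \TTunit$, and $\Elzop\,\supzc{\TTbool}{\varphi} \equiv \TTbool$ all hold definitionally, matching the strict Tarski-style universe promised above.
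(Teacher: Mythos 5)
Your proof is correct and follows essentially the same route as the paper: the codes you construct, $\supzc{\TTempty}{\TTemptyelim}$, $\supzc{\TTunit}{(\lambda\_.\,\emptyset)}$, and $\supzc{\TTbool}{\varphi}$, are exactly the paper's $\emptyz := \emptyset$, $\unitz := \{\emptyset\}$, and $\boolz := \{\emptyset,\{\emptyset\}\}$, verified to be iterative sets by the same embedding arguments sketched in Section~\ref{sec:vdef}. The only difference is that you explicitly supply the separation $\emptyset \neq \{\emptyset\}$ (which the paper leaves implicit when invoking the unordered-pair construction for distinct elements), and your argument for it via Theorem~\ref{thm:ext} is sound.
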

\begin{proof}
  We define the elements $\emptyz, \unitz, \boolz : \Vz$ as follows:
  \begin{align*}
    \emptyz &:= ∅
    \\
    \unitz &:= \{∅\}
    \\
    \boolz &:= \{ ∅, \{∅\} \}
  \end{align*}
  There were all verified to be iterative sets in Section \ref{sec:vdef}.
\end{proof}

We note that the expected equations hold up to definitional equality:
\begin{mathpar}
  \Elz{\emptyz} ≡ \TTempty,
  \qquad
  \Elz{\unitz} ≡ \TTunit
  \qquad \text{and} \qquad
  \Elz{\boolz} ≡ \TTbool.
\end{mathpar}

\begin{proposition}[\agdalink{https://elisabeth.stenholm.one/category-of-iterative-sets/iterative.set.html\#20911}]
  $\Vz$ is closed under the natural numbers.
\end{proposition}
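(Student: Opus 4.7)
The plan is to invoke Proposition~\ref{prop:equiv-emb-int}: to construct $\natz : \Vz$ with $\Elz{\natz} \equiv \TTnat$, it suffices to exhibit an embedding $\TTnat \hookrightarrow \Vz$. Given such an embedding $f$, we set $\natz \defeq \supzc{\TTnat}{f}$, and the desired definitional equation $\Elz{\natz} \equiv \TTnat$ is immediate from the comment after Definition~\ref{def:v} / Proposition~\ref{prop:equiv-emb-int}.

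The embedding itself can be produced by a Zermelo-style recursion: put $f\,\TTnatzero \defeq \emptyz$ and $f\,(\TTnatsuc\,n) \defeq \{ f\,n \}$, both of which are iterative sets by the discussion preceding Definition~\ref{def:v} (in particular, singletons of iterative sets are iterative since $\TTunit$ is an h-proposition and $\Vz$ is an h-set by Theorem~\ref{thm:Vzhset}). To show $f$ is an embedding, I use Lemma~\ref{lem:embedding-propfiber}: since $\Vz$ is an h-set and $\TTnat$ has decidable equality, an embedding $\TTnat \to \Vz$ is the same as an injective map, so it suffices to prove $f\,m = f\,n \to m = n$ by induction on $m$ and $n$.

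The four cases are straightforward. For $m = n = \TTnatzero$, reflexivity. For the mixed cases $m = \TTnatzero$, $n = \TTnatsuc\,k$ (and symmetrically), we must rule out $\emptyz = \{ f\,k \}$. This is where I expect the only real friction: the cleanest argument is that if this equality held, then taking $\Elz{-}$ (or, equivalently, the underlying index type of Theorem~\ref{thm:Vinf-eq}) would give a path $\TTempty = \TTunit$ in $\U$, contradicting the fact that $\TTempty$ is not inhabited while $\TTunit$ is. For the successor--successor case, an equality $\{ f\,m' \} = \{ f\,n' \}$ yields $f\,m' = f\,n'$ via Theorem~\ref{thm:ext} (the unique element of the singleton must be hit by the other), whence $m' = n'$ by the inductive hypothesis and thus $\TTnatsuc\,m' = \TTnatsuc\,n'$.

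The main obstacle is really just the zero/successor distinction, which is a small computation with the equality characterization of Theorem~\ref{thm:Vinf-eq} or Theorem~\ref{thm:ext}; all remaining steps are formal. Once $f$ is established as an embedding, applying $\supz$ produces $\natz$ with the decoding equation holding definitionally, completing the proof.
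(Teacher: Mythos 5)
Your proof is correct, but it takes a genuinely different route from the paper. The paper also reduces the problem to exhibiting an embedding $\TTnat \hookrightarrow \Vz$ via Proposition~\ref{prop:equiv-emb-int} and then applies $\supz$, but it uses the \emph{von Neumann} encoding: it first builds a successor operation $\sucz\,x := \supz\,(\TTCoprod{\overline{x}}{\TTunit}, \varphi_x)$ with $\varphi_x$ sending the left summand to the elements of $x$ and the right summand to $x$ itself, proving $\varphi_x$ is an embedding by showing its fibers are $\TTCoprod{(z \in x)}{(x = z)}$ with the two summands disjoint by well-foundedness of $\in$. Injectivity of the resulting $f : \TTnat \to \Vz$ is then dispatched in one line by observing $\overline{f\,n} \simeq \TTFin{n}$, so $f\,n = f\,m$ forces $\TTFin{n} \simeq \TTFin{m}$ and hence $n = m$. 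Your \emph{Zermelo} encoding $f\,(\TTnatsuc\,n) := \{f\,n\}$ trades these ingredients: each numeral is trivially iterative (singletons of iterative sets are iterative, no appeal to well-foundedness of $\in$ needed), but the cardinality trick is unavailable since $\overline{f\,(\TTnatsuc\,n)} \equiv \TTunit$ for every $n$, so you must instead run the double induction with the zero/successor discrimination ($\ap\,\Elzop$ turning $\emptyz = \{f\,k\}$ into the absurd $\TTempty = \TTunit$) and the singleton-extensionality step. Both arguments are sound and both yield the definitional equation $\Elz{\natz} \equiv \TTnat$, since that depends only on applying $\supz$ to $\TTnat$. One small correction: your reduction of ``embedding'' to ``injective'' needs only that $\TTnat$ and $\Vz$ are both h-sets; decidable equality of $\TTnat$ plays no role there.
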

\begin{proof}
  By Proposition \ref{prop:equiv-emb-int} it is enough to construct an embedding
  $\TTnat ↪ \Vz$\!. Here there is a choice of encoding of the naturals in $\Vz$
  and several encodings are possible. We will use the von Neumann encoding and
  show that this is an embedding.

  First we define the successor function in $\Vz$:
  \begin{align*}
    &\sucz : \Vz → \Vz \\
    &\sucz\,x := \supz\,(\TTCoprod{\overline{x}}{\TTunit}, \varphi_{x})
  \end{align*}
  In the above, $\varphi_{x} : \TTCoprod{\overline{x}}{\TTunit} → \Vz$ is defined as follows:
  \begin{align*}
    &\varphi_{x}\,(\TTinl\,a) := \widetilde{x}\, a \\
    &\varphi_{x}\,(\TTinr\,b) := x
  \end{align*}
  To see that the map $\varphi_{x}$ is an embedding, note that for any $z : \Vz$
  the fiber $\fib{\varphi_x}{z}$ is equivalent to $\TTCoprod{(z \in x)}{(x =
  z)}$.  Both summands are h-propositions and they are disjoint: if they were
  both inhabited we could derive $x \in x$ which contradicts the
  well-foundedness of $\in$~\citep{GylterudStenholm23}.

  The von Neumann encoding of the natural numbers is then the function:
  \begin{align*}
    &f : \TTnat → \Vz \\
    &f\,\TTnatzero := ∅ \\
    &f\,(\TTnatsuc\,n) := \sucz\,(f\,n)
  \end{align*}

  It remains to show that $f$ is an embedding. As $\TTnat$ and $\Vz$ are both
  h-sets it suffices that $f$ is injective. Observe that
  $\overline {f\,x} ≃ \TTFin{x}$, so if $f\,n = f\,m$ then
  $\TTFin{n} ≃ \TTFin{m}$ from which $n = m$ follows.

  Having shown that $f : \TTnat → \Vz$ is an embedding, we define the (code for
  the) natural numbers in $\Vz$ as follows:
  \[
    \natz := \supzc \TTnat f \qedhere
  \]
\end{proof}

Note, again, that the decoding holds up to definitional equality:
\[
  \Elz{\natz} ≡ \TTnat
\]

\subsection{Type formers}

We now turn to closing $\Vz$ under the standard type formers. For these constructions we will need
ordered pairs.

\begin{lemma}[\agdalink{https://elisabeth.stenholm.one/category-of-iterative-sets/fixed-point.unordered-tupling.html\#10439}]
  There is an ordered pairing operation $〈\_,\_〉 : \Vz × \Vz ↪
  \Vz$\!.
\end{lemma}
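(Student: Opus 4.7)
The plan is to define an ordered pair using a set-theoretic encoding that avoids the degenerate behaviour of naive Kuratowski pairs. A convenient choice is the Wiener pair
\[
  \langle x, y \rangle := \{\, \{\emptyset, \{x\}\},\ \{\{y\}\} \,\}.
\]
The advantage of this encoding over Kuratowski's $\{\{x\},\{x,y\}\}$ is that one never needs to form an unordered pair whose two components might coincide: every unordered pair appearing in the construction has one component that is fixed and structurally distinguishable from the other.

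First I would verify that this formula defines an element of $\Vz$. Singletons $\{z\}$ of iterative sets are always iterative (any map out of an h-proposition into an h-set is an embedding, and $\Vz$ is an h-set by Theorem~\ref{thm:Vzhset}), so $\{x\}$, $\{y\}$ and $\{\{y\}\}$ are all in $\Vz$\!. For the two binary unordered pairs I would use the criterion recalled in Section~\ref{sec:vdef}: an unordered pair $\{u,v\}$ is iterative as soon as $u \neq v$. For $\{\emptyset,\{x\}\}$ this holds because $\{x\}$ has a member while $\emptyset$ does not; and the two outer elements $\{\emptyset,\{x\}\}$ and $\{\{y\}\}$ are distinct because only the former contains $\emptyset$ (and $\emptyset \neq \{y\}$ for the same reason).

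Next I would show that $\langle\_,\_\rangle$ is an embedding. Since both $\Vz\times\Vz$ and $\Vz$ are h-sets, it suffices to prove injectivity. Given an equality $\langle x_1,y_1\rangle = \langle x_2,y_2\rangle$, extensionality (Theorem~\ref{thm:ext}) yields that the right-hand and left-hand sides have exactly the same members. Using the observation that among the two members, exactly one contains $\emptyset$ (namely $\{\emptyset,\{x_i\}\}$) and the other does not (namely $\{\{y_i\}\}$), I can match the components on both sides. From $\{\{y_1\}\} = \{\{y_2\}\}$ a double application of injectivity of the singleton operation gives $y_1 = y_2$; from $\{\emptyset,\{x_1\}\} = \{\emptyset,\{x_2\}\}$ I extract the unique non-empty member to get $\{x_1\} = \{x_2\}$, and then $x_1 = x_2$.

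The main obstacle I expect is the uniform handling of the diagonal case $x = y$, which is precisely where the more direct Kuratowski encoding breaks down inside $\Vz$\!. The Wiener encoding circumvents this by using fixed structural markers ($\emptyset$ versus $\{\{y\}\}$) whose distinctness does not depend on $x$ and $y$; verifying this carefully, together with tracing the embedding witnesses through the nested applications of $\supz$ and invoking well-foundedness of $\in$ (as in the proof for $\sucz$) to separate the different layers, is the delicate bookkeeping step, but otherwise the argument is routine.
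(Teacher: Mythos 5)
Your proposal is correct and matches the paper's approach: the paper also defines $\langle\_,\_\rangle$ via the Norbert Wiener encoding $\{\{\emptyset,\{x\}\},\{\{y\}\}\}$, simply deferring the verification to the cited companion paper, and the details you supply (iterativity via distinctness of the components, injectivity via the $\emptyset$-marker and singleton injectivity) are the right ones. One minor remark: well-foundedness of $\in$ is not actually needed here, since all the required distinctions are already settled by whether $\emptyset$ is a member.
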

\begin{proof}
  Ordered pairs are constructed using the Norbert Wiener encoding. The details
  of this construction can be found in the proof of \citep[Theorem~7]{GylterudStenholm23}.
\end{proof}

\begin{proposition}[\agdalink{https://elisabeth.stenholm.one/category-of-iterative-sets/iterative.set.html\#18781}]
  $\Vz$ is closed under $Π$-types.
\end{proposition}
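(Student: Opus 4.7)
By Proposition \ref{prop:equiv-emb-int}, to equip $\Vz$ with a $\Pi$-type former $\Piz{a}{b}$ (for $a : \Vz$ and $b : \Elz{a} \to \Vz$) that decodes definitionally to $\prod_{x : \Elz{a}} \Elz{b\,x}$, it is enough to exhibit an embedding $g : (\prod_{x : \Elz{a}} \Elz{b\,x}) \hookrightarrow \Vz$. We then define
\[
\Piz{a}{b} := \supz\left(\prod_{x : \Elz{a}} \Elz{b\,x},\, g\right),
\]
and the desired definitional equation $\Elz{(\Piz{a}{b})} \equiv \prod_{x : \Elz{a}} \Elz{b\,x}$ falls out of $\Elz{(\supzc{A}{f})} \equiv A$.

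The plan is to encode each function by its graph. Given $f : \prod_{x : \Elz{a}} \Elz{b\,x}$, I would define $\phi_f : \Elz{a} \to \Vz$ by
\[
\phi_f\,x := \langle \widetilde{a}\,x,\, \widetilde{b\,x}\,(f\,x) \rangle,
\]
using the Wiener pairing constructed in the preceding lemma. Since both the pairing and $\widetilde{a}$ are embeddings (the latter by iterativity of $a$) and $\Vz$ is an h-set by Theorem \ref{thm:Vzhset}, $\phi_f$ is injective between h-sets and hence an embedding. We may therefore set $g\,f := \supzc{\Elz{a}}{\phi_f} : \Vz$.

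It remains to check that $g$ itself is an embedding. The domain is an h-set, being a $\Pi$-type into the h-sets $\Elz{b\,x}$, and $\Vz$ is an h-set, so injectivity of $g$ suffices. Suppose $g\,f_1 = g\,f_2$. By Theorem \ref{thm:Vinf-eq}, there is an equivalence $e : \Elz{a} \simeq \Elz{a}$ with $\phi_{f_1} \sim \phi_{f_2} \circ e$. Projecting the first component of each paired equation yields $\widetilde{a}\,x = \widetilde{a}\,(e\,x)$, so $e\,x = x$ for every $x$ by injectivity of the embedding $\widetilde{a}$; substituting $e = \id$ into the second-component equations gives $\widetilde{b\,x}\,(f_1\,x) = \widetilde{b\,x}\,(f_2\,x)$, whence $f_1\,x = f_2\,x$ since each $\widetilde{b\,x}$ is an embedding, and function extensionality concludes $f_1 = f_2$.

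I expect the injectivity step to be the main obstacle, as it requires carefully unpacking the raw equality in $\Vinf$ provided by Theorem \ref{thm:Vinf-eq} and threading the information through the embedding properties of $\widetilde{a}$, the Wiener pairing, and each $\widetilde{b\,x}$. Once this is in place, Proposition \ref{prop:equiv-emb-int} delivers the code $\Piz{a}{b}$ together with its definitional decoding.
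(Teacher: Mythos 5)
Your construction is exactly the paper's: encode each dependent function $f$ by its graph $\lambda x.\langle\widetilde{a}\,x,\widetilde{b\,x}\,(f\,x)\rangle$ using the Wiener pairing and apply $\supz$ to obtain the code, with the definitional decoding falling out of $\Elz{(\supzc{A}{g})}\equiv A$. The only difference is that the paper outsources the fact that this graph map is an embedding to a cited lemma of Gylterud--Stenholm, whereas you prove it inline; your injectivity argument (extract the equivalence $e$ from Theorem \ref{thm:Vinf-eq}, force $e=\id$ via the first components and the embedding $\widetilde{a}$, then conclude pointwise from the second components) is sound.
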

\begin{proof}
  Let $x : \Vz$ and $y : \Elz{x} → \Vz$\!. By
  \citep[Lemma 12]{GylterudStenholm23} there is an embedding:
  \[
    \graph_{x,y} : \left(∏_{a : \Elz{x}} \Elz{(y\,a)}\right) ↪ \Vz
  \]
  This map sends $\varphi : ∏_{a : \Elz{x}} \Elz{(y\,a)}$ to the element
  $\supz\,\left(\Elz{x},λa.〈\widetilde{x}\,a,\widetilde{(y\,a)}\,(\varphi\,a)〉\right)$.
  The $Π$-type is then defined as follows:
  \[
    \Piz{x}{y} := \supz\,\left(∏_{a : \Elz{x}}  \Elz{(y\,a)}, \graph_{x,y}\right) \qedhere
  \]
\end{proof}

The decoding holds up to definitional equality:
\[
  \Elz{\left(\Piz{x}{y}\right)} ≡ ∏_{a : \Elz{x}} \Elz{(y\,a)}
\]

\begin{corollary}[\agdalink{https://elisabeth.stenholm.one/category-of-iterative-sets/iterative.set.html\#19551}]
  $\Vz$ is closed under (non-dependent) function types. Let $\funz{x}{y}$ denote
  the code for the type $\Elz{x} → \Elz{y}$.
\end{corollary}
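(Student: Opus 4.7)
The plan is to derive this as a direct specialization of the preceding $\Pi$-type construction, since the non-dependent function type is the $\Pi$-type over a constant family. Given $x, y : \Vz$, I would simply define
\[
  \funz{x}{y} \defeq \Piz{x}{(λa.\,y)}.
\]
The family $λa.\,y : \Elz{x} → \Vz$ is a well-typed input to the $\Pi$-construction, so this definition goes through immediately, and the proof that $\funz{x}{y}$ is an iterative set is inherited from the proof for $\Piz{x}{y}$.

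For the decoding equation, the previous proposition gives
\[
  \Elz{\left(\Piz{x}{(λa.\,y)}\right)} ≡ ∏_{a : \Elz{x}} \Elz{y},
\]
and the right-hand side is definitionally the non-dependent function type $\Elz{x} → \Elz{y}$ by the standard reading of $\Pi$ with a constant family. So the decoding
\[
  \Elz{\left(\funz{x}{y}\right)} ≡ \Elz{x} → \Elz{y}
\]
holds definitionally, matching the pattern established throughout the section.

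There is essentially no obstacle here: the only thing to be mildly careful about is that $λa.\,y$ has type $\Elz{x} → \Vz$ rather than something one needs to coerce, but this is automatic since $y$ does not depend on $a$. The construction thus reuses the $\graph$-based embedding from \citep[Lemma 12]{GylterudStenholm23} instantiated at a constant family, and no further work is required.
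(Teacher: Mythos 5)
Your proof is correct and is exactly the intended argument: the paper states this as an immediate corollary of the $\Pi$-type proposition, with $\funz{x}{y}$ obtained by instantiating $\Piz{x}{y}$ at the constant family $λa.\,y$, and the definitional decoding follows just as you describe.
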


\begin{proposition}[\agdalink{https://elisabeth.stenholm.one/category-of-iterative-sets/iterative.set.html\#19098}]
  $\Vz$ is closed under $Σ$-types.
\end{proposition}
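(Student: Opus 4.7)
The plan is to mirror the $\Pi$-type construction closely, using Proposition \ref{prop:equiv-emb-int} to reduce the problem to exhibiting an appropriate embedding into $\Vz$. Given $x : \Vz$ and $y : \Elz{x} \to \Vz$, the goal is to find an embedding
\[
  e_{x,y} : \left(\sum_{a : \Elz{x}} \Elz{(y\,a)}\right) \hookrightarrow \Vz,
\]
after which we can set $\Sigmaz{x}{y} := \supz\left(\sum_{a : \Elz{x}} \Elz{(y\,a)},\, e_{x,y}\right)$, and the definitional equation $\Elz{\left(\Sigmaz{x}{y}\right)} \equiv \sum_{a : \Elz{x}} \Elz{(y\,a)}$ will come for free from the definitional computation rule for $\supz$.

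The natural candidate for $e_{x,y}$ is to send a pair $(a, b)$ to the ordered pair $\langle \widetilde{x}\,a,\, \widetilde{(y\,a)}\,b \rangle$ in $\Vz$. To see this is an embedding, I would factor it as the composite
\[
  \sum_{a : \Elz{x}} \Elz{(y\,a)} \;\xrightarrow{\;\varphi\;}\; \Vz \times \Vz \;\xrightarrow{\;\langle\_,\_\rangle\;}\; \Vz,
\]
where $\varphi(a,b) := (\widetilde{x}\,a,\, \widetilde{(y\,a)}\,b)$. The second map is an embedding by the ordered pairing lemma, so it suffices to show $\varphi$ is an embedding. Closure of embeddings under composition then finishes the job.

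The main step is therefore to verify $\varphi$ is an embedding. By Lemma \ref{lem:embedding-propfiber}, it suffices to show that for every $(u,v) : \Vz \times \Vz$ the fiber
\[
  \sum_{a : \Elz{x}} \sum_{b : \Elz{(y\,a)}} \left(\widetilde{x}\,a = u\right) \times \left(\widetilde{(y\,a)}\,b = v\right)
\]
is an h-proposition. Rearranging, this is equivalent to $\sum_{(a,p) \in \fib{\widetilde{x}}{u}} \fib{\widetilde{(y\,a)}}{v}$. Since $\widetilde{x}$ is an embedding (because $x : \Vz$), the base $\fib{\widetilde{x}}{u}$ is an h-proposition, and for each $(a,p)$ in it the second factor is also an h-proposition since $\widetilde{(y\,a)}$ is an embedding. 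A $\Sigma$ of an h-proposition with an h-propositional family is again an h-proposition, giving the result.

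I don't anticipate a serious obstacle: the argument is essentially a dependent variant of the $\Pi$-case, and all the required structural lemmas (embeddings closed under composition, fibers of embeddings are propositions, $\Sigma$ preserves h-propositionality) are already standard. The only subtlety is purely notational: one must keep straight that the ordered pair sits in $\Vz$ while $\varphi$ lives over $\Vz \times \Vz$, so that the composition typechecks and the definitional decoding lands exactly on $\sum_{a : \Elz{x}} \Elz{(y\,a)}$ without any transport.
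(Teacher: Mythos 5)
Your proposal is correct and follows essentially the same route as the paper: both define the map $(a,b) \mapsto \langle \widetilde{x}\,a,\, \widetilde{(y\,a)}\,b\rangle$, factor it through $\Vz \times \Vz$ as the ordered-pairing embedding composed with $\lambda(a,b).(\widetilde{x}\,a,\, \widetilde{(y\,a)}\,b)$, and apply $\supz$ to obtain the definitional decoding. Your explicit fiber computation merely fills in the detail the paper leaves as an assertion, namely that the latter map is an embedding because $\widetilde{x}$ and each $\widetilde{(y\,a)}$ are.
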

\begin{proof}
  Let $x : \Vz$ and $y : \Elz{x} → \Vz$\!. Define a putative embedding as follows:
  \begin{align*}
    &f : \left(∑_{a : \Elz{x}} \Elz{(y\,a)}\right) → \Vz \\
    &f\,(a,b) := 〈\widetilde{x}\,a,\widetilde{(y\,a)}\,b〉
  \end{align*}
  This is the composition of two embeddings: $〈\_,\_〉$ and
  $λ(a,b).(\widetilde{x}\,a,\widetilde{(y\,a)}\,b)$ and therefore an embedding. The last function is
  an embedding because $\widetilde{x}$ is an embedding and as is $\widetilde{(y\,a)}$ for every
  $a : \Elz{x}$. We may now define the code for $Σ$-types:
  \[
    \Sigmaz{x}{y} := \supz\,\left(∑_{a : \Elz{x}} \Elz{(y\,a)}\,f\right) \qedhere
  \]
\end{proof}

The decoding holds up to definitional equality:
\[
  \Elz{\left(\Sigmaz{x}{y}\right)} ≡ ∑_{a : \Elz{x}} \Elz{(y\,a)}
\]

\begin{corollary}[\agdalink{https://elisabeth.stenholm.one/category-of-iterative-sets/iterative.set.html\#19606}]
  $\Vz$ is closed under cartesian products. Let $\prodz{x}{y}$ be
  the code for $\Elz{x} × \Elz{y}$.
\end{corollary}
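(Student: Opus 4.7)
The plan is to derive closure under cartesian products as a direct corollary of closure under $\Sigma$-types, exactly mirroring how $A \times B$ is defined as $\Sigma_{a : A} B$ for a constant family in ordinary type theory. Since we already have $\Sigmazop$ together with the definitional equation $\Elz{(\Sigmaz{x}{y})} \equiv \sum_{a : \Elz{x}} \Elz{(y\,a)}$, all that is needed is a way to turn an element $y : \Vz$ into a constant family $\Elz{x} \to \Vz$ whose decoding agrees with $\Elz{y}$ pointwise.

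First I would define $\prodz{x}{y} := \Sigmaz{x}{(\lambda\_.\,y)}$. The decoding then computes as
\[
  \Elz{(\prodz{x}{y})} \equiv \sum_{a : \Elz{x}} \Elz{((\lambda\_.\,y)\,a)} \equiv \sum_{a : \Elz{x}} \Elz{y},
\]
and since a $\Sigma$-type over a constant family is the cartesian product, this is (definitionally, given the preceding $\eta$ for $\Sigma$) the type $\Elz{x} \times \Elz{y}$. Strictly speaking the identification with $\Elz{x} \times \Elz{y}$ is the standard equivalence between $A \times B$ and $\sum_{a : A} B$; in the formalization this typically holds definitionally when $\times$ is itself defined as a non-dependent $\Sigma$, which is the convention used here.

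Since the construction only composes existing constructions, there is no real obstacle: the witness that the underlying map into $\Vz$ is an embedding is inherited from the proof for $\Sigmazop$, and the definitional decoding equation for $\prodzop$ is inherited from the definitional decoding equation for $\Sigmazop$. The only mild subtlety worth flagging is whether one wants $\Elz{(\prodz{x}{y})}$ to be literally $\Elz{x} \times \Elz{y}$ or merely equivalent to it, which depends on the ambient convention for $\times$; under the standard convention of $A \times B := \sum_{a : A} B$ it is definitional, matching the style of the other corollaries in this section.
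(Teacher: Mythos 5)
Your proposal is correct and matches the paper's intent exactly: the corollary is stated without proof precisely because it follows by instantiating the $\Sigma$-type construction at the constant family, i.e.\ $\prodz{x}{y} := \Sigmaz{x}{(\lambda\_.\,y)}$, with the definitional decoding inherited from $\Sigmazop$. Your remark about whether $\Elz{x} \times \Elz{y}$ is definitionally or merely canonically a non-dependent $\Sigma$-type is the right caveat, and under the paper's conventions it is indeed harmless.
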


In order to construct coproducts in $\Vz$ we need two lemmas about embeddings.

\begin{lemma}[\agdalink{https://elisabeth.stenholm.one/category-of-iterative-sets/foundation.propositional-maps.html\#2885}]\label{lma:emb-fix-pr1}
  Given types $Y$, $Z$ and h-set $X$ with a point $x_0 : X$, any embedding $f : X
  × Y ↪ Z$ gives rise to an embedding by fixing the first
  coordinate: $f\,(x_0,\_) : Y ↪ Z$.
\end{lemma}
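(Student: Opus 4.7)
The plan is to factor the map $f(x_0, {\_})$ as a composition of two embeddings and then invoke the closure of embeddings under composition. Specifically, write $f(x_0, {\_}) = f \circ \iota_{x_0}$, where $\iota_{x_0} : Y \to X \times Y$ is the function $y \mapsto (x_0, y)$. Since $f$ is an embedding by hypothesis, it suffices to show $\iota_{x_0}$ is an embedding; then the conclusion follows because embeddings compose.

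First I would recall that by Lemma \ref{lem:embedding-propfiber}, a function is an embedding iff it has h-propositional fibers. So to verify that $\iota_{x_0}$ is an embedding, I need to show that for every $(x, y) : X \times Y$ the fiber $\fib{\iota_{x_0}}{(x,y)}$ is an h-proposition. Unfolding, this fiber is $\sum_{y' : Y} (x_0, y') = (x, y)$. Using the standard characterization of equality in $\Sigma$-types, this is equivalent to
\[
  \sum_{y' : Y} (x_0 = x) \times (y' = y) \;\simeq\; (x_0 = x) \times \left(\sum_{y' : Y} y' = y\right),
\]
and the right-hand factor is contractible (it is the type of paths ending at $y$), so the whole fiber is equivalent to $x_0 = x$. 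Because $X$ is an h-set, $x_0 = x$ is an h-proposition, and hence so is the fiber. Thus $\iota_{x_0}$ is an embedding.

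Second, I would invoke the fact that the composition of two embeddings is again an embedding (this follows directly from the characterization via propositional fibers, since propositions are closed under $\Sigma$). Applying this to $f \circ \iota_{x_0}$ gives that $f(x_0, {\_})$ is an embedding, which is what we wanted.

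The only mildly subtle step is identifying why $\iota_{x_0}$ is an embedding; this is exactly where the hypothesis that $X$ is an h-set (rather than an arbitrary type) is used, and without this assumption the fiber computation would fail to land in h-propositions. Everything else is routine: the composition-of-embeddings lemma and the characterization of $\Sigma$-equality are both standard in HoTT/UF and are presumably already available in the \agdaunimath library being used.
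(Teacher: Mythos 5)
Your proof is correct, but it takes a different route from the paper's. You factor $f\,(x_0,\_)$ as $f \circ \iota_{x_0}$, prove separately that the inclusion $\iota_{x_0} : y \mapsto (x_0,y)$ is an embedding (its fiber over $(x,y)$ being equivalent to $x_0 = x$, an h-proposition since $X$ is an h-set), and then appeal to closure of embeddings under composition. The paper instead works directly with the fiber of the composite map: it exhibits a chain of equivalences
\[
  \fib{f\,(x_0,\_)}{z} \;\simeq\; \sum_{((x,y),p)\,:\,\fib{f}{z}} x = x_0,
\]
and concludes because this is a $\Sigma$ of h-propositions over the h-proposition $\fib{f}{z}$. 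The two arguments use exactly the same two hypotheses in exactly the same places ($X$ being an h-set makes $x_0 = x$ a proposition; $f$ being an embedding makes $\fib{f}{z}$ a proposition), so the mathematical content is the same; the difference is organizational. Your factorization is more modular --- the embedding $\iota_{x_0}$ and the composition lemma are independently reusable, and indeed both are already in \agdaunimath --- whereas the paper's direct computation produces an explicit description of the fiber of $f\,(x_0,\_)$ itself, which can be handy downstream. Either version is a complete and acceptable proof.
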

\begin{proof}
  We need to show that for any $z : Z$, the fiber of $f\,(x_0,\_)$ over $z$ is
  an h-proposition. But the following chain of equivalences holds:
  \[
    \left(∑_{y : Y} f\,(x_0,y) = z\right)
      ≃ \left(∑_{y : Y} ∑_{∑_{x : X} (x = x_0)} f\,(x,y) = z\right)
      ≃ \left(∑_{((x,y),p) : \fib{f}{z}} x = x_0\right)
  \]
  The last type is an h-proposition since $\fib{f}{z}$ is an h-proposition by Lemma~\ref{lem:embedding-propfiber} and
  for each $((x,y),p) : \fib{f}{z}$, the type $x = x_0$ is an h-proposition.
\end{proof}

\begin{lemma}[\agdalink{https://elisabeth.stenholm.one/category-of-iterative-sets/foundation.equality-coproduct-types.html\#8672}]\label{lma:emb-coprod}
  Given types $X$, $Y$, and $Z$ together with embeddings $f : X ↪
  Z$ and $g : Y ↪ Z$. If $f\,x~\neq~g\,y$ for all $x : X$ and $y :
  Y$ then the following map is an embedding:
  \begin{align*}
    &h : \TTCoprod{X}{Y} → Z \\
    &h\,(\TTinl\,x) := f\,x \\
    &h\,(\TTinr\,y) := g\,y
  \end{align*}
\end{lemma}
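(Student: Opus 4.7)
My plan is to reduce to the fiber characterization of embeddings from Lemma~\ref{lem:embedding-propfiber}: it suffices to show that for every $z : Z$ the fiber $\fib{h}{z}$ is an h-proposition. Unfolding the definition of $h$ by the universal property of the coproduct, I expect an equivalence
\[
  \fib{h}{z} \;≃\; \fib{f}{z} + \fib{g}{z},
\]
where a fiber element $(\TTinl\,x, p)$ is sent to $\TTinl\,(x,p)$ and $(\TTinr\,y, p)$ to $\TTinr\,(y,p)$, with the obvious inverse. This step is purely formal; it is the standard distributivity of $\Sigma$ over $+$ applied to the family $\lambda w.\,h\,w = z$ on $X + Y$.

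Next, because $f$ and $g$ are embeddings, Lemma~\ref{lem:embedding-propfiber} tells us that both $\fib{f}{z}$ and $\fib{g}{z}$ are h-propositions. So it remains to show that the coproduct of these two h-propositions is again an h-proposition. A coproduct $P + Q$ of h-propositions is an h-proposition exactly when $P$ and $Q$ cannot both be inhabited, i.e.\ when we have a map $P \times Q \to \TTempty$. Here that map is immediate from the hypothesis: given $(x,p) : \fib{f}{z}$ and $(y,q) : \fib{g}{z}$ we obtain $f\,x = z = g\,y$, contradicting $f\,x \neq g\,y$.

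The only delicate point I anticipate is the bookkeeping in the first step: one has to check that the equivalence $\fib{h}{z} \simeq \fib{f}{z} + \fib{g}{z}$ is really natural, i.e.\ that the putative inverses compose to the identity on both sides (they do, by pattern matching on the $X+Y$ component). Once this is in place, the disjointness assumption closes the proof with no further calculation.
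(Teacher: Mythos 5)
Your proof is correct, but it takes a genuinely different route from the paper's. You work with the fiber characterization of embeddings (Lemma~\ref{lem:embedding-propfiber}): you decompose $\fib{h}{z} \simeq \fib{f}{z} + \fib{g}{z}$ via distributivity of $\Sigma$ over the coproduct index, note both summands are h-propositions, and then invoke the fact that a coproduct of h-propositions is an h-proposition precisely when they are mutually exclusive, which is where the hypothesis $f\,x \neq g\,y$ enters. The paper instead argues directly from the definition of embedding: it case-splits on where $s, t : \TTCoprod{X}{Y}$ land, observes that in the mixed case both $\TTinl\,x = \TTinr\,y$ and $f\,x = g\,y$ are empty (the latter by hypothesis) so $\ap\,h$ is trivially an equivalence, and in the matching case uses the commuting triangle $\ap\,h \circ \ap\,\TTinl = \ap\,f$ together with 2-out-of-3, since $\ap\,\TTinl$ and $\ap\,f$ are both equivalences. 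Your approach is more modular and matches the style of the paper's own Lemma~\ref{lma:emb-fix-pr1}, at the cost of needing the auxiliary fact about disjoint coproducts of propositions; the paper's approach needs instead the characterization of identity types of coproducts (emptiness of $\TTinl\,x = \TTinr\,y$ and $\TTinl$ being an embedding), which amounts to essentially the same combinatorial content packaged differently. Both are complete; no gap in yours.
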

\begin{proof}
  Let $s,t : \TTCoprod{X}{Y}$. We need to show that $\ap\,h : s = t → h\,s =
  h\,t$ is an equivalence. Using induction on coproducts, there are two kinds of
  cases to consider: when $s$ and $t$ lie in different summands, and when they
  lie in the same one.

  First, suppose without loss of generality that $s ≡ \TTinl\,x$ and $t ≡
  \TTinr\,y$. In this case we need to show that $\ap\,h : \TTinl\,x = \TTinr\,y
  → f\,x = g\,y$ is an equivalence. But both types are empty, so any map between
  them is an equivalence.

  Now, suppose without loss of generality that $s ≡ \TTinl\,x$ and $t ≡
  \TTinl\,x'$. We need to show that $\ap\,h : \TTinl\,x = \TTinl\,x' → f\,x =
  f\,x'$ is an equivalence. But note that the following diagram commutes:
  \begin{center}
    \begin{tikzcd}
      x = x' \arrow[rr, "{\ap\,\TTinl}"] \arrow[rd, "{\ap\,f}"'] &                & {\TTinl\,x = \TTinl\,x'} \arrow[ld, "{\ap\,h}"] \\
                                                                  & {f\,x = f\,x'} &
    \end{tikzcd}
  \end{center}
  Since both $\ap\,f$ and $\ap\,\TTinl$ are equivalences it follows that
  $\ap\,h$ is an equivalence.
\end{proof}

\begin{proposition}[\agdalink{https://elisabeth.stenholm.one/category-of-iterative-sets/iterative.set.html\#19661}]
  $\Vz$ is closed under coproducts.
\end{proposition}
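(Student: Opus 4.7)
The plan is to apply Proposition \ref{prop:equiv-emb-int}: given $x, y : \Vz$ it suffices to construct an embedding $\Elz{x} + \Elz{y} \hookrightarrow \Vz$, and then set $\Coprodz{x}{y} := \supzc{\Elz{x} + \Elz{y}}{h}$ for the resulting embedding $h$. The decoding equation $\Elz{\Coprodz{x}{y}} \equiv \Elz{x} + \Elz{y}$ will then hold definitionally by the computation rule for $\supz$.

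To build $h$, I use the ordered pairing operation $\langle\_,\_\rangle : \Vz \times \Vz \hookrightarrow \Vz$ to tag the two summands with distinct first coordinates. Define
\begin{align*}
  f_L &: \Elz{x} \to \Vz, & f_L(a) &:= \langle \emptyz, \widetilde{x}\,a \rangle, \\
  f_R &: \Elz{y} \to \Vz, & f_R(b) &:= \langle \unitz, \widetilde{y}\,b \rangle.
\end{align*}
Each of these is an embedding: Lemma \ref{lma:emb-fix-pr1} applied to the pairing embedding (with $\Vz$ as the $X$ component and the points $\emptyz$ and $\unitz$) shows that the partial applications $\langle \emptyz, \_\rangle$ and $\langle \unitz, \_\rangle$ are embeddings, and composing with $\widetilde{x}$ and $\widetilde{y}$, which are embeddings by virtue of $x, y$ being iterative sets, preserves this property.

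To combine $f_L$ and $f_R$ via Lemma \ref{lma:emb-coprod}, I must show that $f_L(a) \neq f_R(b)$ for all $a, b$. Since pairing is an embedding (hence injective), any equality $\langle \emptyz, \widetilde{x}\,a\rangle = \langle \unitz, \widetilde{y}\,b\rangle$ would force $\emptyz = \unitz$. But this is impossible: applying $\Elzop$ to such an equality would yield $\TTempty = \TTunit$ as types in $\U$, which contradicts the fact that $\TTempty$ and $\TTunit$ are not equivalent. Equivalently, one may note that $\emptyz \in \unitz$ holds while $\emptyz \in \emptyz$ is empty, so the two iterative sets cannot be identified.

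The main obstacle is the disjointness condition $f_L(a) \neq f_R(b)$, and this is precisely why the tags $\emptyz$ and $\unitz$ are chosen: they are the simplest provably distinct elements of $\Vz$ already available. Everything else is routine composition of embeddings. Applying Lemma \ref{lma:emb-coprod} then yields the required embedding $h : \Elz{x} + \Elz{y} \hookrightarrow \Vz$ sending $\TTinl\,a$ to $f_L(a)$ and $\TTinr\,b$ to $f_R(b)$, completing the construction.
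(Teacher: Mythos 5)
Your proof is correct and follows essentially the same route as the paper: tag the two summands with the provably distinct elements $\emptyz$ and $\unitz$ via the ordered pairing embedding, use Lemma~\ref{lma:emb-fix-pr1} (composed with $\widetilde{x}$, $\widetilde{y}$) for each half, and combine them with Lemma~\ref{lma:emb-coprod} using the disjointness $\emptyz \neq \unitz$. The only difference is cosmetic: you spell out the justification that $\emptyz \neq \unitz$ slightly more explicitly than the paper does.
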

\begin{proof}
  Let $x,y : \Vz$\!. Define the map
  \begin{align*}
    &f : \TTCoprod{\Elz{x}}{\Elz{y}} → \Vz \\
    &f\,(\TTinl\,a) := 〈\emptyz, \widetilde{x}\,a〉 \\
    &f\,(\TTinr\,b) := 〈\unitz, \widetilde{y}\,b〉
  \end{align*}
  By Lemma \ref{lma:emb-fix-pr1} both $λa.〈\emptyz, \widetilde{x}\,a〉$
  and $λb.〈\unitz, \widetilde{y}\,b〉$ are embeddings. Moreover,
  suppose $〈\emptyz, \widetilde{x}\,a〉 = 〈\unitz, \widetilde{y}\,b〉$ for some $a
  : \Elz{x}$ and $b : \Elz{y}$. It then follows that $\emptyz = \unitz$, which
  is a contradiction. Therefore, by Lemma \ref{lma:emb-coprod} we conclude that
  $f$ is an embedding.

  We now define the coproduct:
  \[
    \Coprodz{x}{y} := \supz\,(\TTCoprod{\Elz{x}}{\Elz{y}}, f) \qedhere
  \]
\end{proof}

Note that the decoding holds up to definitional equality:
\[
  \Elz{(\Coprodz{x}{y})} ≡ \TTCoprod{\Elz{x}}{\Elz{y}}
\]

\begin{proposition}[\agdalink{https://elisabeth.stenholm.one/category-of-iterative-sets/iterative.set.html\#20985}]
  $\Vz$ is closed under identity types.
\end{proposition}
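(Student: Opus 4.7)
The plan is to apply Proposition \ref{prop:equiv-emb-int}: to build a code for the identity type $a =_{\Elz{x}} b$ of two elements $a,b : \Elz{x}$, it suffices to exhibit an embedding from that identity type into $\Vz$\!. The crucial observation is that $\Elz{x}$ is an h-set (we proved this just above), so the identity type $a =_{\Elz{x}} b$ is itself an h-proposition. This drastically reduces the problem: any function from an h-proposition into an h-set is automatically an embedding, because its fibers are automatically h-propositional (two elements of a fiber have equal first components by h-propositionality of the domain, and equal second components by h-setness of the codomain).

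Concretely, first I would fix some canonical element of $\Vz$ to use as a target, say $\emptyz : \Vz$\!. Then I would define the map
\[
  f : (a =_{\Elz{x}} b) \to \Vz, \qquad f\,p := \emptyz,
\]
and invoke the above fact (together with Theorem \ref{thm:Vzhset}, which tells us $\Vz$ is an h-set) to conclude that $f$ is an embedding. Using $\supz$\!, we may then define
\[
  \Idz{x}{a}{b} := \supz\,\left((a =_{\Elz{x}} b),\, f\right).
\]
By the definitional computation rule $\Elz{(\supzc A g)} \equiv A$ noted just after Proposition \ref{prop:equiv-emb-int}, the expected decoding
\[
  \Elz{\left(\Idz{x}{a}{b}\right)} \equiv (a =_{\Elz{x}} b)
\]
holds judgmentally, which is what makes $\Vz$ a strict universe with respect to identity types.

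There is essentially no obstacle here beyond being careful about the auxiliary lemma that maps out of an h-proposition into an h-set are embeddings; this is a standard fact but it is the one step that needs to be invoked (or proved inline) to justify that $f$ above is an embedding. If one prefers not to invoke this fact directly, one can observe equivalently that the image of $f$ is a subsingleton of $\Vz$\!, and that the inclusion of any subsingleton into an h-set is an embedding. Either route yields the required embedding without needing to make any structural choice about which element of $\Vz$ to hit, since the definitional equation for $\Elz{}$ only depends on the domain of $f$.
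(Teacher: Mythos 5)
Your proposal is correct and matches the paper's proof essentially verbatim: both define the constant map $f\,p := \emptyz$ out of the identity type, note that a map from an h-proposition into an h-set is automatically an embedding, and set $\Idz{x}{a}{a'} := \supz\,(a = a', f)$ with the definitional decoding following from the computation rule for $\supz$\!. No differences worth noting.
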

\begin{proof}
  Let $x : \Vz$ and $a,a' : \Elz{x}$. Define the following map:
  \begin{align*}
    &f : a = a' → \Vz \\
    &f\,p := ∅
  \end{align*}
  This is an embedding as it is a map from an h-proposition into an h-set. The
  identity type in $\Vz$ is then defined as follows:
  \[
    \Idz{x}{a}{a'} := \supz\,(a = a', f) \qedhere
  \]
\end{proof}

The decoding holds up to definitional equality:
\[
  \Elz{(\Idz{x}{a}{a'})} ≡ \left(a = a'\right)
\]

We emphasize that $\Elz{x}$ is an h-set for any for any $x : \Vz$\!. Accordingly, $\Idz{x}{a}{a'}$ is
necessarily a proposition for any $a,a' : \Elz{x}$. In particular, $\Idz{x}{a}{a'}$ satisfies
UIP. As this identity type represents internalizes the ambient identity type, other expected
properties of the identity type (such as function extensionality) also hold.

\subsection{Set quotients}
\label{sec:vsetquotients}

In order to define set quotients in $\Vz$\!, we must assume that these quotients exist
in our starting universe $\U$. More specifically, we first assume that there is a
function of the following type
\[
  {-}/{-} : ∏_{A : \U} (A → A → \U) → \TThSetU
\]

We then ensure that $A / R$ realizes the quotient of $A$ by the relation $R$ by requiring a map
$[-]_R : A → A / R$ such that $R␣a␣b → [a]_R = [b]_R$ for all $a,b : A$. We also assume a suitable
elimination principle: given a family of h-sets $P : A / R → \TThSetU$, we can construct a function
$∏_{x : A / R} P\,x$ from a function $q : ∏_{x : A / R} P\,[a]_R$ which coheres with the map
$R␣a␣b → [a]_R = [b]_R$. The require that function we get satisfies a coherence condition, and if we
precompose it with the quotient map $[-]_R$ we get back $q$. (For the exact assumptions, see the
formalization
\agdalink{https://elisabeth.stenholm.one/category-of-iterative-sets/set-quotient.html}.)

While we don't need to assume that $R : A → A → \U$ is an equivalence relation (h-propositional,
symmetric, reflexive, transitive), the constructions below will use the fact any $R$ induces an
equivalence relation $|R| : A → A → \TThPropU$ defined by $|R|␣a␣b ≔ \left([a]_R = [b]_R\right)$.

To streamline the process, we will use an interesting formulation of equivalence relations:
\begin{lemma}\label{lma:equiv-rel}
Given a relation $R : A → A → \TThPropU$, the following are equivalent:

\begin{itemize}
 \item $R$ is an equivalence relation
 \item $R␣a␣b ≃ ∏_{c:A} \left(R␣a␣c ≃ R␣b␣c\right)$ for all $a,b:A$
 \item $R␣a␣b ≃ \left(R␣a =_{A → \U} R␣b\right)$ for all $a,b:A$
\end{itemize}
\end{lemma}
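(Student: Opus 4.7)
The plan is to observe first that all three statements are h-propositions, so it suffices to establish mutual logical implications. Statement (1) is a conjunction of h-propositional axioms (reflexivity, symmetry, and transitivity are $\Pi$-types landing in the h-propositions $R\,a\,b$). Statement (2) is a $\Pi$-type of equivalences between h-propositions, which are themselves h-propositional. Statement (3) is h-propositional because, by function extensionality and univalence, it reduces to statement (2).

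I would prove (2) $\Leftrightarrow$ (3) pointwise in $a, b$ by the chain
\[
\left(R\,a =_{A \to \U} R\,b\right) \simeq \prod_{c:A}(R\,a\,c = R\,b\,c) \simeq \prod_{c:A}(R\,a\,c \simeq R\,b\,c),
\]
where the first equivalence is function extensionality and the second is univalence applied pointwise, exploiting that each $R\,a\,c$ is an h-proposition so that equality in $\U$ coincides with equivalence.

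For (1) $\Rightarrow$ (2), assuming reflexivity, symmetry, and transitivity of $R$, I would send a proof of $R\,a\,b$ to the family of equivalences $R\,a\,c \simeq R\,b\,c$ whose forward direction pre-composes with the symmetrized $R\,b\,a$ and uses transitivity, and whose backward direction pre-composes with $R\,a\,b$ and uses transitivity; both sides are h-propositions, so only logical equivalence needs to be checked. Conversely, given a family $\prod_{c:A}(R\,a\,c \simeq R\,b\,c)$, I would instantiate at $c := a$, feed in the reflexivity witness $R\,a\,a$ to obtain $R\,b\,a$, and symmetrize to conclude $R\,a\,b$.

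For (3) $\Rightarrow$ (1) I would extract each axiom in turn. Reflexivity follows by applying the inverse of the equivalence at $a, a$ to the path $\refl : R\,a = R\,a$. Symmetry turns $R\,a\,b$ into $R\,a = R\,b$, inverts the path to $R\,b = R\,a$, and pulls back to $R\,b\,a$. Transitivity concatenates the paths obtained from $R\,a\,b$ and $R\,b\,c$ into $R\,a = R\,c$ and then pulls back to $R\,a\,c$. The only delicate point in the whole argument is justifying that the ``raw'' path type $R\,a =_{A \to \U} R\,b$ is itself h-propositional, which is what allows ``logical equivalence suffices'' to kick in; this is precisely what the (2) $\Leftrightarrow$ (3) reduction provides, so the remaining work is routine manipulation of the equivalence-relation axioms.
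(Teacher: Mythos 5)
Your proposal is correct and follows essentially the same route as the paper: reduce everything to logical equivalences by noting all three statements are h-propositions, identify (2) and (3) via function extensionality and univalence pointwise, and pass between (1) and the extensionality-style condition using transitivity/symmetry in one direction and instantiation at $c := a$ plus reflexivity (or the canonical self-equivalence) in the other. The only cosmetic difference is that you extract the three axioms from statement (3) by path algebra where the paper works from statement (2) with equivalences, which is immaterial once (2) $\Leftrightarrow$ (3) is in place.
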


\begin{proof}
 Since $R$ is an (h-propositional) binary relation, the above
 statements are all h-propositions. The last two are equivalent by
 function extensionality and univalence. It thus remains to
 show that being an equivalence relation is equivalent to one of
 the last two--we choose the middle one.

 Assume that $R$ is an equivalence relation. Everything in sight is an
 h-proposition, so the equivalences are logical equivalences. Thus, assume that
 $R␣a␣b$. Then we get a map $∏_{c:A} R␣a␣c ↔ R␣b␣c$ by transitivity and
 symmetry. In the other direction, if $∏_{c:A} R␣a␣c ↔ R␣b␣c$, choose $c = a$ in
 order to obtain $R␣a␣a ↔ R␣a␣b$. Since $R$ is reflexive, we get $R␣a␣b$.

 Conversely, assume $R␣a␣b ≃ ∏_{c:A} \left(R␣a␣c ≃ R␣b␣c\right)$ for all
 $a,b:A$. To show reflexivity, let $b=a$ and notice that $∏_{c:A} \left(R␣a␣c ≃
 R␣a␣c\right)$ has a canonical element, from which we obtain $R␣a␣a$. Symmetry
 is obtained by observing that $ ∏_{c:A} \left(R␣a␣c ≃ R␣b␣c\right) ≃ ∏_{c:A}
 \left(R␣b␣c ≃ R␣a␣c\right)$ and hence $R␣a␣b ≃ R␣b␣a$. For transitivity,
 remember that $R␣a␣b$ gives $∏_{c:A} \left(R␣a␣c ≃ R␣b␣c\right)$, thus if we
 have $R␣b␣c$ we get $R␣a␣c$ by following the backwards direction of the
 equivalence.
\end{proof}

The property $R␣a␣b ≃ ∏_{c:A} \left(R␣a␣c ≃ R␣b␣c\right)$ for all $a,b:A$ essentially states that
the equivalence classes of $R$ behave well. Tangentially, we note that this requirement make sense
even when $R$ is a general binary family, not only of h-propositions. Thus, this property might make
for interesting future study.

\begin{proposition}[\agdalink{https://elisabeth.stenholm.one/category-of-iterative-sets/iterative.set.html\#23635}]
  $\Vz$ is closed under set quotients. That is, given $a:\Vz$ and $R : \Elz a → \Elz a → \U$
  there is $a /⁰ R : \Vz$ such that $\Elz (a /⁰ R ) ≡ (\Elz a) / R$.
\end{proposition}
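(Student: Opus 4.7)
The plan is to apply Proposition \ref{prop:equiv-emb-int}: it suffices to exhibit an embedding $(\Elz a)/R \hookrightarrow \Vz$, and then set $a /⁰ R := \supz((\Elz a)/R,\, -)$, which automatically yields the required definitional equation $\Elz(a /⁰ R) \equiv (\Elz a)/R$. I would first reduce to the case where $R$ is already an h-propositional equivalence relation by passing to $|R|\,x\,y := ([x]_R =_{(\Elz a)/R} [y]_R)$, which has the same quotient. Lemma \ref{lma:equiv-rel} then supplies the characterization $|R|\,x\,y \simeq (|R|\,x =_{\Elz a \to \U} |R|\,y)$ that drives both halves of the argument.

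The key construction is the iterative-set representative of an equivalence class. For each $x : \Elz a$, the map $\widetilde{a} \circ \TTfst : \left(\sum_{y : \Elz a} |R|\,x\,y\right) \to \Vz$ is an embedding, being the composition of the embedding $\widetilde{a}$ with the first projection out of a family of h-propositions. I set $\mathrm{cl}(x) := \supz\left(\sum_{y : \Elz a} |R|\,x\,y,\, \widetilde{a} \circ \TTfst\right)$. If $|R|\,x\,y$ holds, Lemma \ref{lma:equiv-rel} gives $|R|\,x = |R|\,y$ as families, so the underlying type and selecting function defining $\mathrm{cl}(x)$ and $\mathrm{cl}(y)$ agree up to transport, forcing $\mathrm{cl}(x) = \mathrm{cl}(y)$. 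Hence $\mathrm{cl}$ descends through the quotient map to $\overline{\mathrm{cl}} : (\Elz a)/R \to \Vz$.

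It remains to show $\overline{\mathrm{cl}}$ is an embedding. Since both types are h-sets, it suffices to establish injectivity: if $\mathrm{cl}(x) = \mathrm{cl}(y)$, then by Theorem \ref{thm:Vinf-eq} there is an equivalence $e : \left(\sum_z |R|\,x\,z\right) \simeq \left(\sum_z |R|\,y\,z\right)$ such that $\widetilde{a} \circ \TTfst$ commutes with $e$. Since $\widetilde{a}$ is injective, $e$ must preserve the first projection, so as subtypes of $\Elz a$ we have $\{z \mid |R|\,x\,z\} = \{z \mid |R|\,y\,z\}$; specializing $z := y$ and invoking reflexivity of $|R|$ at $y$ yields $|R|\,x\,y$, i.e.\ $[x]_R = [y]_R$. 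The main obstacle is precisely this injectivity step, which crucially uses the embedding property of $\widetilde{a}$ to transfer the equality in $\Vinf$ back to pointwise agreement of the relations $|R|\,x$ and $|R|\,y$ inside $\Elz a$; once that is in hand, the definitional decoding is immediate from the use of $\supz$.
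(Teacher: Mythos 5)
Your proof is correct and takes essentially the same route as the paper's: the identical representative map $x \mapsto \supz\left(\sum_{y}|R|\,x\,y,\ \widetilde{a}\circ\TTfst\right)$, the same reduction via Proposition \ref{prop:equiv-emb-int}, and the same use of Theorem \ref{thm:Vinf-eq} and Lemma \ref{lma:equiv-rel} to transfer equality in $\Vz$ back to $[x]_R=[x']_R$. The paper merely packages your two steps (descent through the quotient and injectivity of the descended map) into a single chain of equivalences $(f\,x = f\,x') \simeq ([x]_R = [x']_R)$ fed to the quotient's elimination principle.
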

\begin{proof}
  By Proposition \ref{prop:equiv-emb-int} it suffices to construct an embedding
  $\Elz a / R ↪ \Vz$\!. We define $f :~\Elz a → \Vz$ and prove that for any
  $x,x':\Elz a$ we have $\left([x]_R = [x']_R\right) ≃ \left(f␣x = f␣x'\right)$.
  By the elimination principle for set quotients, this will induce an embedding
  $\Elz a / R ↪ \Vz$\!.

  Thus, let $f␣x = \supzc{∑_{y:\Elz a}|R|␣x␣y}{\widetilde a ∘ \TTfst}$, and
  observe the chain of equivalences:

  \begin{align*}
    \left(f␣x = f␣x'\right)
      &≡  \left(\supzc{∑_{y:\Elz a}|R|␣x␣y}{\widetilde a ∘ \TTfst} = \supzc{∑_{y:\Elz a}|R|␣x'␣y}{\widetilde a ∘ \TTfst}\right) \\
      &≃ ∑_{α : \left(∑_{y:\Elz a}|R|␣x␣y\right) ≃ \left(∑_{y:\Elz a}|R|␣x'␣y\right)} \widetilde a ∘ \TTfst = \widetilde a ∘α∘\TTfst \\
      &≃ ∑_{α : \left(∑_{y:\Elz a}|R|␣x␣y\right) ≃ \left(∑_{y:\Elz a}|R|␣x'␣y\right)}  \TTfst = α∘\TTfst \\
      &≃ ∏_{y : \Elz a} |R|␣x␣y ≃ |R|␣x'␣y \\
      &≃ |R|␣x␣x' \\
      &≡ \left([x]_R = [x']_R\right)
  \end{align*}

  Note that we have used the characterization of equivalence relations given by
  Lemma \ref{lma:equiv-rel} in the next to last step.
\end{proof}

\subsection{Using the type formers}

Using the types and type formers in $\Vz$ we can construct new types. The
decoding of these composite types will then hold up to definitional equality.
For example, given elements $x,y : \Vz$\!, a map $f : \Elz{x} → \Elz{y}$ and $b :
\Elz{y}$ we can define the code for the fiber of $f$ over $b$ as
\[
  \fibz{f}{b} := \Sigmaz{x}{(λa.\Idz{y}{(f\,a)}{b})}
\]
After applying the decoding function, we get obtain the following definitional equality:
\[
  \Elz{\left(\fibz{f}{b}\right)} ≡ \fib{f}{b}
\]

\subsection{Universes}
\label{sec:vuniverses}

The flexible handling of hierarchies of universes is a key feature
of dependent type theory. It makes it easy to formalize higher order
concepts, and mathematical structures. Our universe construction
retains this ability, and in this subsection we demonstrate that
by observing that the types $\Vz_0, \Vz_1, ⋯ ,\Vz_{\ell}, ⋯$ form
a hierarchy of universes, where each universe occurs as a type
with a code in the next.

\begin{proposition}[\agdalink{https://elisabeth.stenholm.one/category-of-iterative-sets/iterative.set.html\#16171}]
  For any universe level $\ell$ there is a code $\Vzcode{\ell} : \Vz_{\ell +
  1}$ for $\Vz_{\ell}$ with the definitional equality $\Elz{\Vzcode{\ell}} ≡ \Vz_{\ell}$
\end{proposition}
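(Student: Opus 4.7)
The plan is to invoke Proposition \ref{prop:equiv-emb-int}: to produce a code for $\Vz_\ell$ in $\Vz_{\ell+1}$ it suffices to construct an embedding $\iota : \Vz_\ell \hookrightarrow \Vz_{\ell+1}$. Setting $\Vzcode{\ell} := \supzc{\Vz_\ell}{\iota}$, the definitional equation $\Elz{(\supzc{A}{f})} \equiv A$ recorded after Proposition \ref{prop:equiv-emb-int} immediately yields $\Elz{\Vzcode{\ell}} \equiv \Vz_\ell$. Size-wise this is well-formed, since $\Vz_\ell$ (as a subtype of the $\Wop$-type $\Vinf_\ell$ built over $\U_\ell$) lives in $\U_{\ell+1}$.

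I would define $\iota$ by induction on the underlying $\Vinf_\ell$-tree, carrying the iterativeness data along. On $(\supc{A}{f}, p)$ set $\iota\,(\supc{A}{f}, p) := \supzc{A}{(\iota \circ f)}$, viewing $A : \U_\ell$ as an element of $\U_{\ell+1}$ via universe lifting. For this application of $\supz$ to be legal, the composite $\iota \circ f$ must be an embedding; this is handled in the same induction, since $f$ is an embedding by $p$ and $\iota$ is, by inductive hypothesis, an embedding on the image of $f$, whose elements $f\,a$ have strictly smaller $\Vinf_\ell$-rank.

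Finally, I would argue globally that $\iota$ is an embedding. As $\Vz_{\ell+1}$ is an h-set by Theorem \ref{thm:Vzhset}, it suffices to show that $\iota$ is injective. The cleanest route is Theorem \ref{thm:Vinf-eq}: an equality $\iota\,x = \iota\,y$ unfolds to a pair $(e, h)$ with $e : \overline{\iota\,x} \simeq \overline{\iota\,y}$ and $h : \reallywidetilde{\iota\,x} \sim \reallywidetilde{\iota\,y} \circ e$. By construction $\overline{\iota\,x} \equiv \overline{x}$ and $\reallywidetilde{\iota\,x} \equiv \iota \circ \reallywidetilde{x}$, so $(e,h)$ really provides an equivalence $\overline{x} \simeq \overline{y}$ together with a homotopy $\iota \circ \reallywidetilde{x} \sim \iota \circ \reallywidetilde{y} \circ e$. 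Using the inductive hypothesis that $\iota$ is injective on elements of strictly smaller rank, this descends to $\reallywidetilde{x} \sim \reallywidetilde{y} \circ e$, and another application of Theorem \ref{thm:Vinf-eq} yields $x = y$.

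The main obstacle I anticipate is disentangling the well-founded induction that simultaneously builds $\iota$, verifies the embedding side-condition needed by $\supz$, and establishes global injectivity. The associated universe-level bookkeeping is routine, and once this inductive scheme is in place all the desired equations, including the definitional $\Elz{\Vzcode{\ell}} \equiv \Vz_\ell$, fall out automatically.
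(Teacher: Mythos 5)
Your overall strategy --- lift each tree by structural recursion via $\supop\,A\,f \mapsto \supop\,A\,(\iota \circ f)$, prove the lift is an embedding using Theorem \ref{thm:Vinf-eq}, and then form the code with $\supz$ --- is exactly the paper's. However, there is a genuine circularity in how you set up the recursion, and it is not mere bookkeeping. You define $\iota$ directly into $\Vz_{\ell+1}$ by $\iota\,(\supc{A}{f},p) := \supzc{A}{(\iota\circ f)}$, so at the moment of \emph{defining} $\iota$ at a node you must already supply a proof that $\iota\circ f$ is an embedding, i.e.\ that $\ap\,\iota : f\,a = f\,a' \to \iota(f\,a)=\iota(f\,a')$ is an equivalence for all $a,a':A$. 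The structural induction hypothesis at that node only concerns the immediate subtrees $f\,a$; to get what you need you would want the property ``for all $y$, $\ap\,\iota : f\,a = y \to \iota(f\,a)=\iota\,y$ is an equivalence'', which quantifies over arbitrary $y$ and hence presupposes that $\iota$ is already defined everywhere. Your appeal to ``injectivity on elements of strictly smaller rank'' hides the same issue: as the paper explicitly warns, the induction hypothesis does not say the lift is an embedding, only that $\ap$ of the lift is an equivalence at \emph{some} pairs of elements, and one must route around this.

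The paper's resolution is to decouple the definition from the proof: first define the lift $\varphi : \Vinf_\ell \to \Vinf_{\ell+1}$ on the raw $\Wop$-type, where $\supop$ carries no embedding side-condition, so the recursion is unproblematic; then prove by a separate induction (with the statement quantified over the second argument) that $\varphi$ is an embedding, using Theorem \ref{thm:Vinf-eq} on both sides to reduce the claim to postcomposition by the fiberwise equivalences $\ap\,\varphi : f\,a = g\,(e\,a) \to \varphi(f\,a) = \varphi(g\,(e\,a))$ supplied by the induction hypothesis; and finally check by a third induction that $\varphi$ sends iterative sets to iterative sets, where $\varphi\circ f$ is now an embedding as a composite of embeddings. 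Only then is $\Vzcode{\ell} := \supz\,(\Vz_\ell,\varphi)$ formed, and the definitional equality $\Elz{\Vzcode{\ell}} \equiv \Vz_\ell$ falls out exactly as you say. If you restructure your argument along these lines, the rest of your proposal goes through.
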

\begin{proof}
  Given a universe level $\ell$, we need to construct an embedding $\Vz_{\ell}
  ↪ \Vz_{\ell^+}$. For this, we start by constructing an embedding
  $\Vinf_{\ell} ↪ \Vinf_{\ell^+}$. Thus define the map
  \begin{align*}
    &\varphi : \Vinf_{\ell} → \Vinf_{\ell^+} \\
    &\varphi\,(\supop\,A\,f) := \supop\,A\,(\varphi ∘ f)
  \end{align*}
  (Note that we are using cumulative universes in the ambient type theory, so $A : \U_{\ell^+}$ whenever $A
  : \U_{\ell}$.) To show that $\varphi$ is an embedding, let $\supop\,A\,f,
  \supop\,B\,g : \Vinf_{\ell}$ be arbitrary elements. We need to show that
  \[
    \ap\,\varphi : \supop\,A\,f = \supop\,B\,g → \supop\,A\,(\varphi ∘ f) = \supop\,B\,(\varphi ∘ g)
  \]
  is an equivalence. First, note that the following diagram commutes:
  \begin{center}
    \begin{tikzcd}
      ∑_{X : \U_{\ell}} X → \Vinf_{\ell} \arrow[d, "\supop"'] \arrow[rrr, "{λ(X,h).(X,\varphi ∘ h)}"] &  &  & ∑_{X : \U_{\ell^+}} X → \Vinf_{\ell^+} \arrow[d, "\supop"] \\
      \Vinf_{\ell} \arrow[rrr, "\varphi"']                                                                         &  &  & \Vinf_{\ell^+}
    \end{tikzcd}
  \end{center}
  The map $\supop$ is an equivalence, so $\varphi$ is an embedding if and only
  if the top map is an embedding. Thus we need to show the following to be an equivalence:
  \[
    \ap\,(λ(X,h).(X,\varphi ∘ h)) : (A,f) = (B,g) → (A,\varphi ∘ f) = (B,\varphi ∘ g)
  \]
  While we might hope to argue that this is a fiberwise embedding and therefore the total map is an
  embedding as well. Unfortunately, our induction hypothesis does not state that $\varphi$ is an
  embedding, i.e., that $\ap$ is an equivalence for \emph{all} elements. It only ensures that it is
  an equivalence for \emph{some} elements. We must then take a different path and instead note
  that the diagram below commutes.
  \begin{center}
    \begin{tikzcd}
      {(A,f)=(B,g)} \arrow[d, "{≃}"'] \arrow[rrr, "{\ap\,(λ(X,h).(X,\varphi ∘ h))}"] &  &  & {(A,\varphi ∘ f) = (B, \varphi ∘ g)} \arrow[d, "≃"] \\
      ∑_{e : A ≃ B} f ∼ g ∘ e \arrow[rrr, "{λ(e,H).(e,\ap\,\varphi ∘ H)}"']      &  &  & ∑_{e : A ≃ B} \varphi ∘ f ∼ \varphi ∘ g ∘ e
    \end{tikzcd}
  \end{center}
  The vertical maps are provided by Theorem \ref{thm:Vinf-eq}. Using 3-for-2, the top map is an
  equivalence if and only if the bottom one is an equivalence. We now note that it suffices to check
  this property on fibers, so we need to show that given $e : A ≃ B$, the following map is an
  equivalence:
  \[
    (λH.\ap\,\varphi ∘ H) : f ∼ g ∘ e → \varphi ∘ f ∼ \varphi ∘ g ∘ e
  \]
  We now recall that postcomposition by a family of maps is an equivalence if it is a family of
  equivalences. Finally, we must argue that
  $\ap\,\varphi : f\,a = g\,(e\,a) → \varphi\,(f\,a) = \varphi\,(g\,(e\,a))$ is an equivalence for
  all $a : A$. This follows from the induction hypothesis. Thus, we conclude that
  $\varphi : \Vinf_{\ell} → \Vinf_{\ell^+}$ is an embedding.

  To argue that this equivalence restricts to $\Vz$\!, we must show this equivalence sends iterative
  sets to iterative sets.  Thus let $\supop\,A\,f : \Vinf_{\ell}$ be such that $f$ is an embedding
  and $f\,a$ is an iterative set for all $a : A$. We must argue that $\supop\,A\,(\varphi ∘ f)$ is
  an iterative set. But the map $\varphi ∘ f$ is an embedding as the composition of two
  embeddings. Moreover, by the induction hypothesis, for any $a : A$, $\varphi\,(f\,a)$ is an
  iterative set since $f\,a$ is an iterative set.

  Therefore, $\varphi$ is an embedding from $\Vz_{\ell}$ into $\Vz_{\ell^+}$.
  The code for $\Vz_{\ell}$ in $\Vz_{\ell^+}$ is thus defined as
  \[
    \Vzcode{\ell} := \supz\,(\Vz_{\ell}, \varphi) \qedhere
  \]
\end{proof}

Note that the decoding holds up to definitional equality:
\[
  \Elz{\Vzcode{\ell}} ≡ \Vz_{\ell}
\]

\begin{proposition}
  $\Vz$ is \emph{not} a univalent universe.
\end{proposition}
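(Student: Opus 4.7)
The plan is to exploit the tension between $\Vz$ being an h-set and the fact that equivalences between types in $\U$ form larger spaces. Recall that for any Tarski-style universe $(V, \mathsf{El})$, being univalent means that the canonical map
\[
  (a = b) \to \left(\mathsf{El}\,a \simeq \mathsf{El}\,b\right)
\]
is an equivalence for all $a, b : V$. I would argue by contradiction: suppose $\Vz$ is univalent.

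By Theorem \ref{thm:Vzhset}, $\Vz$ is an h-set, so $(a =_{\Vz} b)$ is an h-proposition for all $a, b : \Vz$. If $\Vz$ were univalent, then for every $a, b : \Vz$ the type $\Elz{a} \simeq \Elz{b}$ would also be an h-proposition, as it would be equivalent to an h-proposition. It therefore suffices to exhibit a single code $a : \Vz$ such that $\Elz{a} \simeq \Elz{a}$ is not an h-proposition.

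Take $a \defeq \boolz$ from Proposition \ref{prop:empty-unit-booleans}, so that $\Elz{\boolz} \equiv \TTbool$ definitionally. The type $\TTbool \simeq \TTbool$ contains at least two distinct elements: the identity equivalence $\idequiv$ and the equivalence given by boolean negation. These are distinct because they disagree on $\TTtrue$, and this disagreement can be extracted via the underlying function. Hence $\TTbool \simeq \TTbool$ is not an h-proposition, contradicting the consequence of univalence derived above. The only step requiring a bit of care is ensuring that the two equivalences on $\TTbool$ are provably distinct in the ambient type theory, but this is immediate from the fact that $\TTtrue \neq \TTfalse$ in $\TTbool$.
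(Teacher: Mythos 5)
Your proof is correct and follows essentially the same approach as the paper: the paper's proof likewise observes that $x = y$ is an h-proposition since $\Vz$ is an h-set while $\Elz{x} \simeq \Elz{y}$ is in general a proper h-set. You merely make the paper's ``in general'' concrete by exhibiting $\boolz$ and the two distinct self-equivalences of $\TTbool$, which is a reasonable elaboration rather than a different route.
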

\begin{proof}
  For $x, y : \Vz$\!, $x = y$ is an h-proposition as $\Vz$ is an h-set,
  but $\Elz\,x ≃ \Elz\,y$ is in general a proper h-set.
\end{proof}

\section{\texorpdfstring{$\Vz$}{V0} as a category}
\label{sec:vcategory}

The universe structure on $\Vz$ induces a category with a full and faithful functor into
$\hSetcatU$. In this section, we define this category and show that it is closed under many
essential constructions (finite limits and colimits, exponentials, and more). This category provides
another concrete way for us to \emph{measure} the adequacy of $\Vz$ as a replacement for $\TThSetU$;
the former induces a closely related category to the latter, sharing many similar properties.

\begin{definition}[\agdalink{https://elisabeth.stenholm.one/category-of-iterative-sets/iterative.set.category.html\#428}]
  Let $\Vcat$ be the category with
  \begin{itemize}
  \item $\Ob\,\Vcat := \Vz$
  \item $\Hom_{\Vcat}\,(x,y) := \Elz{x} \to \Elz{y}$
  \item $\id$ and $\circ$ are simply the identity function and
    function composition.
  \end{itemize}

  All laws hold by $\refl$ as $\id$ and $\circ$ are the identity and composition
  from the ambient type theory. For $x,y : \Vz$\!, the type $\Hom\,(x,y)$ is an
  h-set as it consists of functions into an h-set.
\end{definition}

Note that we will take \emph{category} to denote what the
\citetalias{HoTT13} calls ``precategory'', \emph{univalent category} to
denote what the book calls ``category'', and \emph{strict
  category} to denote a category where the objects form an h-set.
Hence, $\Vcat$ is a strict category as $\Vz$ is an
h-set.

The following holds more-or-less by construction:
\begin{lemma}[\agdalink{https://elisabeth.stenholm.one/category-of-iterative-sets/iterative.set.category.html\#906}]
  \label{lem:elfff}
  The map $\Elzop$ induces a full and faithful functor from $\Vcat$ to $\hSetcatU$.
\end{lemma}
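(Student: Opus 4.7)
The plan is to observe that $\Vcat$ was set up precisely so that hom-types match those of $\hSetcatU$ definitionally, making this lemma essentially a tautology given the preceding work. First I would verify that $\Elzop$ lands in $\Ob\,\hSetcatU$: this is exactly the earlier theorem that $\Elz{x}$ is an h-set for every $x : \Vz$. On morphisms, the functor acts by the identity function, since $\Hom_{\Vcat}(x,y) := \Elz{x} \to \Elz{y}$ is definitionally the hom-type $\Elz{x} \to \Elz{y}$ in $\hSetcatU$.

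Functoriality then reduces to $\refl$ for both preservation of identities and preservation of composition, because in both categories $\id$ and $\circ$ are the ambient type-theoretic identity and function composition. For fullness and faithfulness, one needs the action on hom-sets to be an equivalence for each pair $x, y : \Vz$, and since this action is the identity function on $\Elz{x} \to \Elz{y}$, the claim follows immediately via $\idequiv$.

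There is no real obstacle here; the work was done when setting up the definitions of $\Vcat$ and $\Elzop$, and this lemma is the payoff---every clause goes through by $\refl$ or a trivial appeal to $\idequiv$. The only thing worth flagging in the write-up is that the definitional equalities $\Hom_{\Vcat}(x,y) \equiv \Hom_{\hSetcatU}(\Elz{x}, \Elz{y})$ are really holding on the nose, which is why the functor's action on morphisms can be taken to be the identity rather than some transport or coercion.
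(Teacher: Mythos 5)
Your proposal is correct and matches the paper's treatment: the paper offers no written proof beyond the remark that the lemma ``holds more-or-less by construction,'' which is precisely the observation you spell out---the hom-types agree definitionally, the action on morphisms is the identity, and functoriality, fullness, and faithfulness all reduce to $\refl$ and $\idequiv$. Your additional check that $\Elz{x}$ lands in $\Ob\,\hSetcatU$ via the earlier h-set theorem is the right supporting fact.
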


Clearly, $\Vcat$ is not a univalent category since it possesses objects with
non-trivial automorphisms, but the type of objects in $\Vcat$ is an h-set.
Still, one might ask if $\Elzop$ is an equivalence of categories. This
does not appear to be true in general, but can be implied by further axioms.
For instance, the axiom of choice implies that $\Elzop$ is an equivalence.
The core of this is whether every type in $\U$ can be equipped with an
iterative set-structure---a property known as well-founded materialization.
We discuss this further in Section~\ref{sec:othervs:hset}.

Fortunately, even without additional axioms we are able to show that $\Vcat$
retains much of the essential structure of $\hSetcatU$ and that $\Elzop$
preserves many important categorical constructions. In order to show that
$\Vcat$ is closed under some categorical structure, it therefore suffices to
break the process into two stages:
\begin{enumerate}
\item Show that $\hSetcatU$ is closed under \emph{e.g.,} finite limits, exponentials, \emph{etc.}
\item Show that the objects involved land in the image of $\Elzop$\!.
\end{enumerate}

Better still, $\hSetcatU$ is well-studied and known to be closed under all the
categorical structures we consider \citep{RS15}. Our task is
therefore reduced only to showing that various objects of $\hSetcatU$ land in
the image of $\Elzop$\!. For this, we repeatedly capitalize on the fact that the
decoding $\Elzop$ holds up to definitional equality; it ensures that the final
step can be rephrased as follows: show that there exists an iterative structure
on the objects involved.
This pattern is used repeatedly to prove the following result:

\begin{therm}[\agdalink{https://elisabeth.stenholm.one/category-of-iterative-sets/iterative.set.category.properties.html}]
  $\Vcat$ is closed under and $\Elzop$ preserves the following:
  \begin{enumerate}
  \item initial object,
  \item terminal object,
  \item finite coproducts,
  \item pushouts,
  \item finite products,
  \item pullbacks, and
  \item exponentials.
  \end{enumerate}
\end{therm}
\begin{proof}
  As $\hSetcatU$ supports these structures it suffices to show that each of the
  representing objects land in the image of $\Elz$\!. This clearly follows from the results in
  Section~\ref{sec:vtarski}, for instance, the existence of the initial and terminal object follows
  from Proposition~\ref{prop:empty-unit-booleans}, and e.g., pullbacks can be constructed through $\Sigmaz$ and $\fibopz$ just like in $\hSetcatU$.
\end{proof}

As $\Vcat$ has pullbacks/pushouts and terminal/initial object we directly get:

\begin{corollary}
  $\Vcat$ has finite limits and colimits. These are preserved by $\Elzop$\!.
\end{corollary}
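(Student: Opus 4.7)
The plan is to reduce the corollary to the standard categorical result that finite limits can be built from a terminal object together with binary pullbacks, and dually that finite colimits decompose into an initial object together with binary pushouts. The preceding theorem already supplies all four ingredients for $\Vcat$ and shows that each is preserved by $\Elzop$, so in principle nothing new is needed and the corollary should be essentially immediate.

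More concretely, to witness the existence of finite limits I would appeal to the well-known fact that any finite limit diagram can be expressed as an iterated pullback: given a finite diagram $D$, one first assembles the product of all its objects (obtainable from iterated pullbacks over the terminal object), then equalizes the parallel pairs of arrows induced by the edges of $D$ via pullbacks along diagonals. Dually, finite colimits decompose into iterated pushouts together with the initial object. Since both building blocks are already provided by the preceding theorem, $\Vcat$ has all finite limits and colimits.

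For the preservation claim I would invoke the functorial version of the same construction: any functor that preserves terminal (resp.\ initial) objects and binary pullbacks (resp.\ pushouts) automatically preserves all finite limits (resp.\ colimits). Since $\Elzop$ preserves each of the four elementary pieces by the preceding theorem, preservation of all finite limits and colimits follows at once.

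The main obstacle I anticipate is not mathematical but bookkeeping: in a formalization one has to pick a precise presentation of finite diagrams and carry out the decompositions explicitly, which can be tedious. Mathematically, however, this is standard category theory and no new ideas are required beyond those already established in Section~\ref{sec:vcategory}, so the proof in the paper can reasonably be a one-line appeal to the theorem.
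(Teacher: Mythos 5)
Your proposal is correct and matches the paper's approach: the paper derives the corollary directly from the preceding theorem via the standard fact that a terminal object plus pullbacks yields all finite limits (dually for colimits), with preservation following because $\Elzop$ preserves each building block. The paper gives no further detail, so your elaboration of the decomposition is exactly the intended argument.
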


We defer further categorical considerations of $\Vcat$ to future work
and instead turn our attention to slice categories of $\Vcat$, which
play an important role in the study of it as a model of type theory.

\subsection{Slice categories of $\Vcat$}

Similar methods to the ones above also apply when showing that the slice categories
$\Vcat/a$ are well-behaved. In particular, $\Elz$ induces a full and faithful
functor $\Vcat/a \to \hSetcatU/\Elz{a}$. We can use this fact to deduce,
e.g., that $\Vcat/a$ is cartesian closed.

\begin{proposition}[\agdalink{https://elisabeth.stenholm.one/category-of-iterative-sets/iterative.set.category.slices.properties.html}]
  For any $a : \Vz$\!, $\Vcat / a$ has finite limits.
\end{proposition}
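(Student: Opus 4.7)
The plan is to reuse the template of the preceding theorem, exploiting the full and faithful functor $\Vcat/a \to \hSetcatU/\Elz{a}$ mentioned just above the proposition. Since a standard argument shows that if $\mathcal{D}$ has finite limits then so does $\mathcal{D}/b$ for any $b$, the category $\hSetcatU/\Elz{a}$ already has finite limits. A full and faithful functor reflects limits of diagrams whose limits in the codomain land in its image, so it suffices to check that each universal object representing a finite limit in $\hSetcatU/\Elz{a}$ lies in the image of $\Vcat/a \to \hSetcatU/\Elz{a}$.

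A category has finite limits iff it has a terminal object together with pullbacks, so the verification splits into two cases. For the terminal object, I would observe that $(\Elz{a}, \id_{\Elz{a}})$ is manifestly the image of $(a, \id_a) \in \Vcat/a$, where $\id_a$ is the identity morphism on $a$ in $\Vcat$. For pullbacks, given objects $(x,f)$, $(y,g)$ of $\Vcat/a$ mapping to a common $(z,h)$ via morphisms $u : \Elz{x} \to \Elz{z}$ and $v : \Elz{y} \to \Elz{z}$, I would recall that the pullback in the slice $\hSetcatU/\Elz{a}$ is computed by taking the underlying pullback in $\hSetcatU$ and equipping it with the induced map to $\Elz{a}$. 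By the previous theorem, $\Vcat$ is closed under pullbacks and $\Elz$ preserves them, so the pullback object is of the form $\Elz{p}$ for some $p : \Vz$. Composing the projection with $f$ (or equivalently $g$) yields a morphism in $\Vcat$ to $a$, producing an object of $\Vcat/a$ whose image is the required pullback.

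I do not anticipate any serious obstacle. The only genuine content beyond the previous theorem is the bookkeeping observation that a pullback in a slice is a pullback in the base, together with the verification that the induced morphism into $a$ coheres on the nose. The definitional behavior of $\Elz$ on $\Sigma$-types and identity types ensures that the pullback object built in $\Vcat$ matches its image in $\hSetcatU/\Elz{a}$ without requiring any further coherence, so the proof should reduce to unwinding the relevant hom-sets in the slice category and invoking the closure properties already established.
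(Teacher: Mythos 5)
Your proposal is correct and matches the paper's (very terse) proof, which sketches exactly the two ingredients you use: the standard fact that limits in a slice reduce to limits in the base (you phrase this via the terminal-object-plus-pullbacks decomposition, with $(a,\id_a)$ terminal and pullbacks in $\Vcat/a$ created from pullbacks in $\Vcat$), and the observation that $\hSetcatU/\Elz{a}$ is finitely complete with the relevant limiting objects landing in the image of the full and faithful functor $\Vcat/a \to \hSetcatU/\Elz{a}$. No gaps; your version is simply a fleshed-out combination of the two routes the paper mentions.
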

\begin{proof}
  This can be proved using the standard argument: products in a slice category
  are realized by pullbacks in the underlying category and connected limits are
  realized by limits of the underlying diagram. We could also argue by noting that
  $\hSetcatU/\Elz{a}$ is finitely complete and that limits of diagrams in the
  image of $\Elz$ remain in the image of $\Elz$\!.
\end{proof}

We give a bit more details in the following proof as it showcases the usefulness
of being able to encode things directly in $\Vz$\!, combined with the fact that
$\Elz$ strictly decodes to the expected thing in $\U$.

\begin{proposition}[\agdalink{https://elisabeth.stenholm.one/category-of-iterative-sets/iterative.set.category.slices.properties.html\#7694}]
  For any $a : \Vz$\!, $\Vcat / a$ has exponentials.
\end{proposition}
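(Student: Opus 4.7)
The plan is to construct the exponential directly in $\Vz$ using the closure properties established in Section~\ref{sec:vtarski}, and then to inherit the universal property through the full and faithful functor $\Elzop : \Vcat/a \to \hSetcatU/\Elz{a}$ provided by Lemma~\ref{lem:elfff}, together with the fact (known for $\hSetcatU$) that slices of a locally cartesian closed category have exponentials.

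Given objects $f : x \to a$ and $g : y \to a$ of $\Vcat/a$, I would define the candidate exponential as
\[
  g^f \,:=\, \Sigmaz{a}{\bigl(\lambda s.\,\Piz{\fibz{f}{s}}{(\lambda t.\,\fibz{g}{s})}\bigr)}
\]
with the first projection serving as its structure map to $a$. Thanks to the strict decoding equations for $\Sigma^0$, $\Pi^0$, and $\fibopz$, we obtain
\[
  \Elz{g^f} \,\equiv\, \sum_{s : \Elz{a}}\,\prod_{p : \fib{f}{s}}\,\fib{g}{s},
\]
which is exactly the underlying h-set of the exponential of $g$ by $f$ in $\hSetcatU/\Elz{a}$. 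The evaluation map $\mathrm{ev} : g^f \times_a f \to g$ in $\Vcat/a$ is then given on the pullback by $((s,\varphi),(b,q)) \mapsto \TTfst\,(\varphi\,(b,q))$; the $\TTsnd$-component of $\varphi\,(b,q)$ witnesses that this morphism lies over $a$.

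To establish the universal property, I would transfer it along $\Elzop$. Since $\Elzop$ is full and faithful and, by the preceding theorem together with the strict decoding identities just invoked, sends the constructed data (pullbacks, the $\Sigma$-$\Pi$ combination defining $g^f$, and the evaluation map) to the standard construction of exponentials in $\hSetcatU/\Elz{a}$, the required natural bijection
\[
  \Hom_{\Vcat/a}(h \times_a f,\, g) \;\cong\; \Hom_{\Vcat/a}(h,\, g^f)
\]
can be lifted directly from the corresponding adjunction bijection in $\hSetcatU/\Elz{a}$.

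The main obstacle is essentially bookkeeping rather than anything substantive: one must carefully check that the evaluation map really does commute with the structure maps to $a$, and that the transposition obtained via fullness agrees with the one produced on the $\hSetcatU$ side. Because every decoding equation invoked holds judgmentally and $\Elzop$ is the identity on hom-sets modulo these equalities, these verifications reduce to routine rewriting.
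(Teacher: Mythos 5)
Your construction is essentially the paper's own: both define the exponential as a $\Sigmazop$ over $a$ of a fiberwise function type built from $\fibopz$ and $\Pizop$/$\funzop$\!, observe that the strict decoding equations make its $\Elzop$-image definitionally equal to the exponential object of $\hSetcatU/\Elz{a}$, and then transfer the universal property along the full and faithful slice functor induced by $\Elzop$\!. The paper is terser (it leaves the evaluation map and the transposition check implicit) and orders the arguments oppositely, but the mathematical content is the same.
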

\begin{proof}
  Given $(x,f), (y,g) : \Ob\,(\Vcat / a)$, define their exponential as the
  element
  \[
    \mathsf{exp}\,(x,f)\,(y,g) := \Sigmaz{a}{(\lambda\,i.\,\funz{\fibz{g}{i}}{\fibz{f}{i}})}
  \]
  Note that we have the following definitional equality:
  \[
    \Elz{(\mathsf{exp}\,(x,f)\,(y,g))} ≡ \sum_{i : \Elz{a}} \fib{g}{i} → \fib{f}{i}
  \]
  This is the exponential in $\hSetcatU/\Elz{a}$, so $\mathsf{exp}\,(x,f)\,(y,g)$ is an
  exponential object in $\Vcat/a$.
\end{proof}

\begin{corollary}
  $\Vcat$ is locally cartesian closed and $\Elzop$ is a locally cartesian closed functor.
\end{corollary}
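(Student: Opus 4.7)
The plan is to unpack the definition of local cartesian closure into conditions that the two preceding propositions already establish, and then observe that $\Elzop$ sends each piece of structure to the corresponding piece strictly rather than only up to isomorphism.

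For the first claim, I would invoke the standard characterization that a category with finite limits is locally cartesian closed precisely when every slice is cartesian closed, i.e., has finite products and exponentials. The terminal object and pullbacks in $\Vcat$ come from the preceding theorem, and for each $a : \Vz$ the slice $\Vcat/a$ has finite limits (hence in particular finite products) by the first proposition of this subsection and exponentials by the second. This immediately yields that $\Vcat$ is locally cartesian closed.

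For the second claim, I would take the characterization of a locally cartesian closed functor as one preserving finite limits and the exponentials in each slice. Preservation of finite limits is the content of the preceding theorem (together with $\Elzop$ preserving the terminal object). For the slice exponentials, the key point is that the object $\mathsf{exp}\,(x,f)\,(y,g) := \Sigmaz{a}{(\lambda i.\, \funz{\fibz{g}{i}}{\fibz{f}{i}})}$ decodes definitionally to $\sum_{i : \Elz{a}} \fib{g}{i} \to \fib{f}{i}$, which is precisely the exponential in $\hSetcatU/\Elz{a}$.

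The only subtlety I anticipate is that an LCC functor must preserve not merely the exponential object but also its evaluation morphism and transposes. Since the exponential structure in $\Vcat/a$ is built entirely out of $\Sigmazop$, $\funzop$, and $\fibopz$, whose decodings hold definitionally, the evaluation and currying maps in $\Vcat/a$ are literally the corresponding maps in $\hSetcatU/\Elz{a}$ after decoding, and the right adjointness transports along these judgmental equalities without obstruction. I therefore do not expect any serious obstacle: the substantive work was carried out when constructing slice exponentials, and the strictness of $\Elzop$ ensures the LCC-functor axioms hold without any coherence modifications.
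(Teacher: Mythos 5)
Your proposal is correct and follows essentially the same route as the paper: the corollary is meant to be read off directly from the two preceding propositions (finite limits and exponentials in each slice $\Vcat/a$), with preservation by $\Elzop$ coming from full faithfulness and the fact that the slice exponential $\Sigmaz{a}{(\lambda i.\,\funz{\fibz{g}{i}}{\fibz{f}{i}})}$ decodes definitionally to the exponential in $\hSetcatU/\Elz{a}$. Your extra remark about evaluation maps and transposes being preserved strictly is a point the paper leaves implicit, and it is handled correctly.
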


Finally, the following proposition foreshadows the next section where
we build a model of type theory on $\Vcat$. In that section, we wish to
interpret types in context $a$ as elements of $\Vcat/a$ and to realize
substitution as pullback. It is well-known, however, that the result is merely
pseudofunctorial and therefore insufficient to form a (strict) model of type
theory~\citep{Seely84,CurienGarnerHofmann14}. In the specific case of
$\hSetcatU$, there is a well-known pseudo-natural equivalence between the slice
category $\hSetcatU/a$ and the functor category $[a,\hSetcatU]$ which remedies
the coherence issues. This equivalence restricts to the full subcategory
determined by $\Vcat$:

\begin{proposition}[\agdalink{https://elisabeth.stenholm.one/category-of-iterative-sets/iterative.set.category.slices.functor.html\#8836}]
  \label{prop:slice-equiv}
  Given $a : \Vz$ and writing $a$ for the corresponding discrete category
  associated with $\Elz{a}$, there is a canonical equivalence $\Vcat / a \simeq
  [a,\Vcat]$.
\end{proposition}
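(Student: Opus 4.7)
The plan is to construct the equivalence directly by encoding the standard $\hSetcatU$-argument inside $\Vz$\!, exploiting the fact that the relevant constructions ($\Sigmazop$, $\fibopz$) have been shown in Section \ref{sec:vtarski} to decode definitionally to their counterparts in $\U$. Concretely, I will define two functors going back and forth, verify they are inverse, and punt the underlying mathematical content to the well-known equivalence $\hSetcatU / \Elz{a} \simeq [\Elz{a},\hSetcatU]$ via the full and faithful functor $\Elzop$ from Lemma~\ref{lem:elfff}.

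\textbf{Step 1: The two functors.} Define $\Phi : \Vcat/a \to [a,\Vcat]$ on objects by $\Phi\,(x,f)\,i := \fibz{f}{i}$, using that $\Vz$ is closed under $\Sigmazop$ and identity types. On morphisms, given $\phi : (x,f) \to (x',f')$, the map $\Phi(\phi)_i : \fibz{f}{i} \to \fibz{f'}{i}$ sends $(b,p)$ to $(\phi\,b, p)$, which typechecks because the decoding is definitional and $f' \circ \phi = f$. In the other direction, define $\Psi : [a,\Vcat] \to \Vcat/a$ on objects by $\Psi\,F := (\Sigmaz{a}{F}, \TTfst)$ and on a natural transformation $\alpha : F \Rightarrow F'$ by $\Psi(\alpha)\,(i,b) := (i,\alpha_i\,b)$, which lies over $a$ by construction. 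Functoriality for both is immediate as everything reduces to identity and composition in $\U$.

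\textbf{Step 2: Natural isomorphisms.} For $\Phi \circ \Psi$, we need for each $F : \Elz{a} \to \Vz$ an isomorphism $F\,i \simeq \fibz{\TTfst}{i}$ in $\Vcat$, natural in $i$. After applying $\Elzop$, this becomes the standard equivalence $\Elz{(F\,i)} \simeq \fib{\TTfst}{i}$, which holds definitionally (both sides decode to the same $\U$-type up to a routine equivalence). Since $\Elzop$ is full and faithful by Lemma~\ref{lem:elfff}, we can lift this back to an iso in $\Vcat$. For $\Psi \circ \Phi$, we need $(\Sigmaz{a}{\Phi(x,f)}, \TTfst) \cong (x,f)$ in $\Vcat / a$. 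On decoded types this is the canonical equivalence $\sum_{i:\Elz{a}}\fib{f}{i} \simeq \Elz{x}$ commuting with the projections to $\Elz{a}$, which again holds in $\hSetcatU/\Elz{a}$ and lifts to $\Vcat/a$ by full-faithfulness of $\Elzop$ on slice categories.

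\textbf{Step 3: Assembly.} Combine the pointwise isos of Step 2 into natural isomorphisms $\Phi \circ \Psi \cong \mathsf{id}$ and $\Psi \circ \Phi \cong \mathsf{id}$; naturality is automatic as all components are induced by the underlying $\U$-level naturality. This establishes the equivalence of categories.

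\textbf{Anticipated obstacle.} There is no deep mathematical difficulty here; the content is precisely the classical slice-category-versus-functor-category equivalence for discrete index. The only real subtlety is bookkeeping: ensuring that the $\Vz$-level encodings of $\Sigmazop$ and $\fibopz$ compose correctly so that the required iso $(\Sigmaz{a}{\Phi(x,f)}, \TTfst) \cong (x,f)$ is genuinely expressible in $\Vcat / a$ rather than only in $\hSetcatU / \Elz{a}$. This is precisely where the definitional decoding $\Elz{(\Sigmaz{x}{y})} \equiv \sum_{a:\Elz{x}}\Elz{(y\,a)}$ and $\Elz{(\fibz{f}{b})} \equiv \fib{f}{b}$ pay off: the iso in $\hSetcatU/\Elz{a}$ literally \emph{is} an iso in $\Vcat/a$ because no transport is needed, and full-faithfulness of $\Elzop$ then concludes.
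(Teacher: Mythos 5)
Your proposal is correct and matches the paper's construction exactly: the paper's (one-sentence) proof gives the same pair of functors, $F \mapsto (\Sigmaz{a}{F}, \TTfst)$ and $(x,f) \mapsto \lambda i.\,\fibz{f}{i}$, and leaves the verification as ``standard,'' which you have simply spelled out. One small caveat: your morphism action $(b,p) \mapsto (\phi\,b, p)$ does not typecheck on the nose, since the triangle $f' \circ \phi = f$ in the slice only commutes propositionally, so $p$ must be transported along that path---harmless because the target identity type is a proposition in an h-set, but the needed coercion comes from the triangle, not from the definitional decoding.
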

\begin{proof}
  The equivalence is constructed in the standard way. The functor from
  $[a,\Vcat]$ to $\Vcat / a$ sends $F : \Ob\,[a,\Vcat]$ to the element
  $\Sigmaz{a}{F}$ together with the first projection. In the other direction, an
  element $(x,f) : \Ob\,(\Vcat / a)$ is mapped to the functor $\lambda i.
  \fibz{f}{i}$.
\end{proof}

\section{\texorpdfstring{$\Vz$}{V0} as a category with families}
\label{sec:vcwf}

Having established that $\Vz$ organizes into a
well-behaved category, we now take this a step further by showing that $\Vz$
supports a model of extensional type theory. Since our goal is to do this very
formally, our task is threefold: first, we must specify what we mean by a model
of type theory. To this end, we have formalized a particular presentation of a
\emph{category with families (CwF)} \citep{Dybjer96}. This extends a category with the additional
structure required to interpret dependent type theory. Next, we show that
$\Vcat$ can be equipped with this additional structure. Finally, since our
definition of a CwF does not prescribe closure under any
connectives, we detail how to extend a CwF with various connectives and show
that the CwF structure on $\Vcat$ supports these extensions.

\begin{remark}\label{remark:why-sig-structure-sucks}
  Of these three steps, only the first two are fully formalized. The obstruction
  to formalizing closure under all relevant extensions is, surprisingly,
  completely independent of $\Vz$\!. Rather, it stems from the fact that the
  equations governing substitution hold only up to propositional equality,
  leading to complicated path and transport computations when defining the
  substitution properties of said structure.
\end{remark}

\newcommand{\C}{\mathcal{C}}
\newcommand{\Ty}{\mathsf{Ty}}
\newcommand{\Tm}{\mathsf{Tm}}
\newcommand{\Op}[1]{{#1}^{\mathsf{op}}}
\newcommand{\Elcat}[1]{\int\mkern -3mu #1}

\subsection{The definition of categories with families}

In the paper and formalization we rely on the following formulation of categories with families.

\begin{definition}[Category with families, \agdalink{https://elisabeth.stenholm.one/category-of-iterative-sets/type-theories.precategories-with-families.html\#2067}]
  A category with families (CwF) consists of:
  \begin{itemize}
    \item A category $\C$ with a terminal object,
    \item a presheaf $\Ty : \Op{\C} → \hSetcatU$,
    \item a presheaf $\Tm : \Op{\left(\Elcat{\Ty}\right)} → \hSetcatU$,
    \item a functor ${-}.{-} : \Elcat{\Ty} → \C$, and
    \item for each $Δ : \Ob\,\C$ and $(Γ , A) : \Elcat{\Ty}$ a natural
    equivalence
    \[\Hom\,(Δ, Γ.\,A) ≃ ∑_{γ\;: \Hom\,(Δ,Γ)} \Tm\,(Δ, A \cdot γ)\]
  \end{itemize}
\end{definition}

Here we have written $A \cdot γ$ for $\Ty\,(γ\,A)$, and $\Elcat{\Ty}$ for the
\emph{category of elements} of $\Ty$, \ie{}, the total space of the right
fibration induced by $\Ty$.
Intuitively, objects in $\C$ interpret the contexts of our type theory,
while morphisms interpret substitutions. The additional presheaves
are used to interpret types and terms. Specifically, the set of semantic types
in context $Γ : \Ob\,\C$ is given by $\Ty\,Γ$ while the set of terms of type
$A : \Ty\,Γ$ is given by $\Tm\,(Γ,A)$.  The functoriality of $\Ty$ and $\Tm$ is
precisely the structure required to interpret the application of substitutions
to types and terms.

The terminal object interprets the empty context and the functor from
$\Elcat{\Ty}$ to $\C$ interprets context extension. A context $Γ: \Ob\,\C$ can be
extended by a type $A : \Ty\,Γ$ in that context, to produce a new context $Γ.\,A
: \Ob\,\C$.

Finally, the natural equivalence ties together substitutions and elements of
$\Tm$. In particular, the inverse encodes the ability to extend a substitution
with a term. Following this last observation, we write $Γ.\,a$ for the element
$\Hom\,(Δ,Γ.\,A)$ induced by the element $(Γ,a) : \sum_{γ\;: \Hom\,(Δ,Γ)}
\Tm\,(Δ, A \cdot~γ)$.

\subsection{Equipping $\Vcat$ with a CwF structure}

We now turn to equipping $\Vcat$ with a CwF structure. We begin by defining
$\Ty$ as follows:
\[
  \Ty\,X := \Elz{X} \to \Vz
\]

Intuitively, a type in context $X$ is precisely an $X$-indexed family of sets.
There is, however, a major subtlety in this definition that should be
emphasized: the version of the definition where $\Vz$ is replaced by
$\TThSetU$ would be \emph{incorrect}. We have required that $\Ty\,X$ always be
an h-set as it is assumed to be an $\TThSetU$ valued presheaf. Therefore, it is
only after finding an adequate ``h-set of h-sets'' that we can define the set
model of type theory in this manner.

The definition of the presheaf of terms is also reasonably direct:
\[
  \Tm\,(X, A) = \prod_{x : \Elz{X}} \Elz{(A\,x)}
\]

We now show that, along with $\Vcat$, these two definitions assemble into a CwF.
\begin{proposition}[\agdalink{https://elisabeth.stenholm.one/category-of-iterative-sets/iterative.set.cwf-structure.html\#3203}]
  $\Vcat$ can be equipped with a CwF structure.
\end{proposition}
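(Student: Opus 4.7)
The plan is to assemble the CwF structure on $\Vcat$ piece by piece. A terminal object is provided by $\unitz$, since $\Elz\unitz \equiv \TTunit$ and so $\Hom_{\Vcat}(X, \unitz) \equiv \Elz X \to \TTunit$ is contractible. I would take $\Ty\,X := \Elz X \to \Vz$, which is an h-set because $\Vz$ is an h-set by Theorem~\ref{thm:Vzhset}; for a morphism $\gamma : \Hom(\Delta, \Gamma) \equiv \Elz\Delta \to \Elz\Gamma$, the action $A \cdot \gamma := A \circ \gamma$ makes $\Ty$ a strict presheaf, with functoriality holding by $\refl$ since composition in $\Vcat$ is ordinary function composition. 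Similarly, $\Tm(X, A) := \prod_{x : \Elz X} \Elz{(A\,x)}$ is an h-set, and a morphism $(\gamma, p) : (\Delta, B) \to (\Gamma, A)$ in $\Elcat\Ty$ (\ie, $\gamma : \Hom(\Delta, \Gamma)$ with $p : A \cdot \gamma = B$) acts contravariantly on a term $t : \Tm(\Gamma, A)$ by transporting $\lambda d.\,t\,(\gamma\,d)$ along $p$.

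For context extension, I would define $\Gamma.A := \Sigmaz{\Gamma}{A}$, using that $A : \Elz\Gamma \to \Vz$. The definitional decoding $\Elz{(\Gamma.A)} \equiv \sum_{x : \Elz\Gamma} \Elz{(A\,x)}$ from Section~\ref{sec:vtarski} makes the remainder essentially transparent: a morphism $(\gamma, p) : (\Delta, B) \to (\Gamma, A)$ in $\Elcat\Ty$ induces the map on $\Elz{(\Delta.B)}$ sending $(d, e)$ to $(\gamma\,d, e')$, where $e'$ is obtained by transporting $e$ along $p$ applied pointwise at $d$. For the universal property, unfolding gives
\[
  \Hom(\Delta, \Gamma.A) \equiv \Elz\Delta \to \sum_{g : \Elz\Gamma} \Elz{(A\,g)},
\]
and the equivalence distributing $\Pi$ over $\Sigma$ yields $\sum_{\gamma : \Hom(\Delta,\Gamma)} \Tm(\Delta, A \cdot \gamma)$; naturality in $\Delta$ is immediate by unfolding both sides.

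The main obstacle is bureaucratic rather than conceptual. Morphisms of $\Elcat\Ty$ carry a propositional equation $p : A \cdot \gamma = B$, and verifying the functoriality of context extension together with the naturality of the universal property requires coherently tracking transports along these equations, as foreshadowed in Remark~\ref{remark:why-sig-structure-sucks}. Crucially, the choice to work with $\Vz$ rather than $\TThSetU$ is what makes the definition of $\Ty$ well-typed at all (since $\Ty\,X$ must land in $\hSetcatU$), and the definitional decoding of $\Sigmazop$ is what prevents the transport bookkeeping from spiralling out of control: many of the equations that would otherwise require explicit coercions hold on the nose.
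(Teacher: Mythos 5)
Your proposal is correct and follows essentially the same route as the paper: the same definitions $\Ty\,X := \Elz{X} \to \Vz$ and $\Tm\,(X,A) := \prod_{x : \Elz{X}} \Elz{(A\,x)}$ with functorial action by precomposition, context extension via $\Sigmaz{\Gamma}{A}$ exploiting the definitional decoding, and the comprehension equivalence obtained from the $\eta$/distributivity principle for dependent sums. Your extra attention to the transports along $p : A \cdot \gamma = B$ in the category of elements is exactly the bookkeeping the paper acknowledges in Remark~\ref{remark:why-sig-structure-sucks}.
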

\begin{proof}
  We have given the putative definitions of $\Ty$ and $\Tm$. We note that it is
  straightforward to ensure that both are suitably functorial. The functorial
  action is given by precomposition and all the required equations hold
  on-the-nose.

  It remains to show that these three pieces of data satisfy the required
  properties of a CwF.  We have already shown that $\Vcat$ has a terminal
  object, so it remains to discuss the interpretation of context extension.
  Fix $X : \Vz$ and $A : \Ty\,X$. We define $X.\,A : \Vz$ as $\Sigmaz{X}{A}$.
  The natural equivalence then follows from the $\eta$ principle of dependent
  sums.
\end{proof}
By virtue of Proposition~\ref{prop:slice-equiv}, we further note that types $A$ in context $X$ in
this model are realized up to equivalence by families $\Elz{A} \to \Elz{X}$ and terms are likewise
determined by sections. By presenting $\Ty$ and $\Tm$ in terms of (dependent) products rather than
families and sections, we are able to equip both with strictly functorial actions.

We emphasize that the accomplishment here is not in the definition itself; it mirrors the na{\"i}ve
definition of the set model of type theory as presented by e.g., \citet{Hofmann97}.  What is crucial
is that $\Vz$ retains enough of the good behavior of $\TThSetU$ to support such a straightforward
definition of the CwF structure while still managing to be an h-set itself.

\begin{remark}
  We note that there are many closely related presentations of models of type theory (categories
  with attributes~\citep{Cartmell86}, contextual categories~\citep{Cartmell86,Streicher91},
  comprehension categories~\citep{Jacobs93,Jacobs99}, natural models~\citep{Fiore12,Awodey18natural}
  and so on). We have opted for CwFs because the CwF structure on $\Vz$ is particularly simple and
  enjoys an exceptional number of definitional equalities. In particular, as opposed to other models
  which recover terms indirectly as sections to display maps, CwFs require the presheaf of terms as
  part of their data. This allows us to choose a particular definitional representative for the type
  of terms in our model and we are then able to explicitly select dependent functions. We shall see
  that this makes closing $(\Vcat, \Ty, \Tm)$ under various constructions particularly
  straightforward, as most naturality conditions hold definitionally.
\end{remark}

\subsection{Further structure on $\Vz$ as a CwF}

While we have constructed a CwF structure on $\Vcat$, we have thus far only shown that the model
interprets the basic structural rules of type theory, but not that it is closed under any connectives. The
process of extending the model with new connectives is essentially modular: for each connective, we
specify the relevant structure on top of a CwF necessary to interpret it and then show that the CwF
$(\Vcat,\Ty,\Tm)$ supports this additional structure.

We illustrate the process with $\Pi$-types. First, we must define a $\Pi$-structure on a CwF.
\newcommand{\PiTy}{\mathsf{pi}}
\newcommand{\TmIso}{\alpha}
\begin{definition}[$\Pi$-structure, \agdalink{https://elisabeth.stenholm.one/category-of-iterative-sets/type-theories.pi-types-precategories-with-families.html\#869}]
  A $\Pi$-structure on a CwF $(\C,\Ty,\Tm)$ is defined by the following:
  \begin{itemize}
  \item An operation $\PiTy : ∏_{Γ : \Ob\,\C} \Ty\,Γ → \Ty\,(Γ.\,A) → \Ty\,Γ$, natural in $Γ$\!.
  \item For any $Γ : \Ob\,\C$, $A : \Ty\,Γ$\!, and $B : \Ty\,(Γ.\,A)$ an isomorphism $\TmIso_\PiTy$
    between $\Tm\,Γ\,(\PiTy\,Γ\,A\,B)$ and $\Tm\,(Γ.A)\,B$, natural in $Γ$\!.
  \end{itemize}
\end{definition}

\begin{lemma}[\agdalink{https://elisabeth.stenholm.one/category-of-iterative-sets/iterative.set.cwf-structure.html\#3941}]
  The CwF $(\Vcat,\Ty,\Tm)$ supports a $\Pi$-structure.
\end{lemma}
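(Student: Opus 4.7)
The plan is to exploit the definitional decodings of $\Piz$ and $\Sigmaz$ established in Section~\ref{sec:vtarski} so that the $\Pi$-structure on $(\Vcat, \Ty, \Tm)$ can be read off from the usual set-theoretic one. Recall that $\Ty\,\Gamma \equiv \Elz{\Gamma} \to \Vz$, that $\Gamma.\,A \equiv \Sigmaz{\Gamma}{A}$ (so $\Elz{(\Gamma.\,A)} \equiv \sum_{x : \Elz{\Gamma}} \Elz{(A\,x)}$), and that $\Tm\,(\Gamma, C) \equiv \prod_{x : \Elz{\Gamma}} \Elz{(C\,x)}$.

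For the type former, given $\Gamma : \Vz$, $A : \Ty\,\Gamma$, and $B : \Ty\,(\Gamma.\,A)$, I set
\[
  \PiTy\,\Gamma\,A\,B \;:=\; \lambda x.\,\Piz{(A\,x)}{(\lambda a.\,B\,(x,a))}.
\]
The key point is that by the definitional decoding of $\Piz$ from Section~\ref{sec:vtarski}, we get
$\Elz{(\PiTy\,\Gamma\,A\,B\,x)} \equiv \prod_{a : \Elz{(A\,x)}} \Elz{(B\,(x,a))}$,
so unfolding definitions yields
$\Tm\,\Gamma\,(\PiTy\,\Gamma\,A\,B) \equiv \prod_{x : \Elz{\Gamma}} \prod_{a : \Elz{(A\,x)}} \Elz{(B\,(x,a))}$
and
$\Tm\,(\Gamma.\,A)\,B \equiv \prod_{p : \sum_{x:\Elz{\Gamma}} \Elz{(A\,x)}} \Elz{(B\,p)}.$
The isomorphism $\TmIso_\PiTy$ is then just (un)currying, $\lambda t.\lambda(x,a).t\,x\,a$ paired with $\lambda s.\lambda x.\lambda a.s\,(x,a)$. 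Because we assume definitional $\eta$ for $\Sigma$-types in the ambient theory, both composites are $\refl$; in particular this is not merely an isomorphism but a definitional equivalence, hence certainly an isomorphism of h-sets.

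For naturality in $\Gamma$, I would unfold $A \cdot \gamma$ (which is precomposition by $\Elzop\,\gamma$) in both $\PiTy$ and $\TmIso_\PiTy$. Since both are built from plain precomposition and (un)currying of dependent functions, all the required squares commute on the nose; the proofs reduce to $\refl$ once the functorial actions are expanded. This is exactly the situation where the strictness of the decoding $\Elzop$ pays off: naturality conditions that would otherwise require transports along the isomorphism $\Elz{(\Piz{x}{y})} \simeq \prod \Elz$ instead hold by $\refl$.

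The only thing to watch, in line with Remark~\ref{remark:why-sig-structure-sucks}, is that the components of $\PiTy$ and $\TmIso_\PiTy$ live in types indexed by $\Ty$ and $\Tm$ whose reindexing maps are defined by precomposition. Thus I expect no real obstacle here; the main (but still routine) chore is to check that the isomorphism $\TmIso_\PiTy$ really does give a natural transformation in the precise sense demanded by the definition of $\Pi$-structure, which amounts to unfolding the definition of the functor $-.-$ and verifying that the currying isomorphism commutes with substitution. Because every relevant equation either holds definitionally (thanks to $\Sigma$-$\eta$ and the strict decoding of $\Piz$ and $\Sigmaz$) or reduces to function extensionality, the verification carries through without any of the transport gymnastics that complicate naturality for less well-behaved connectives.
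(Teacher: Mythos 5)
Your proof is correct and follows essentially the same route as the paper's: the same definition of $\PiTy$ via $\Piz$ applied pointwise, with $\TmIso_\PiTy$ given by the currying equivalence $\prod_{x}\prod_{a} \Elz{(B(x,a))} \simeq \prod_{p : \sum_x \Elz{(A\,x)}} \Elz{(B\,p)}$, which the strict decoding of $\Piz$ and $\Sigmaz$ makes definitional. Your additional remarks on $\Sigma$-$\eta$ and on naturality reducing to $\refl$ are accurate elaborations of what the paper leaves as ``a straightforward computation.''
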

\begin{proof}
  We begin by defining $\PiTy$ as follows:
  \[
    \PiTy\,X\,A\,B := \lambda (x : \Elz(X)).\, \Piz{(A\,x)}{(\lambda a.B\,(x,a))}
  \]
  Naturality in $X$ is a straightforward computation. The definition of $\TmIso_{\PiTy}$, after
  unfolding, reduces to the manifestly natural equivalence induced by currying:
  \[
    \prod_{x : X}\prod_{y : Y(x)} Z(x,y) \simeq \prod_{p : \sum_{x : X} Y(x)} Z(p) \qedhere
  \]
\end{proof}

We have formalized both the definition of $\Pi$-structures and the particular $\Pi$-structure on
$(\Vcat,\Ty,\Tm)$ in \Agda. However, already some small inconveniences emerge. For
instance, in the definition of naturality for $\TmIso_{\PiTy}$, we must specify an equality
dependent on the proof witnessing naturality of $\PiTy$. The dependence is straightforward in this case,
but becomes more complex for the later structures. Accordingly, we present only paper proofs for
them.

Furthermore, $(\Vcat,\Ty,\Tm)$ also supports dependent sums.
\newcommand{\SigmaTy}{\mathsf{sig}}
\begin{definition}[$\Sigma$-structure]
  A $\Sigma$-structure on a CwF $(\C,\Ty,\Tm)$ consists of the following two pieces of
  data:
  \begin{itemize}
  \item An operation $\SigmaTy : ∏_{Γ : \Ob\,\C} \Ty\,Γ → \Ty\,(Γ.\,A) → \Ty\,Γ$\!, natural in $Γ$\!.
  \item For any $Γ : \Ob\,\C$, $A : \Ty\,Γ$\!, and $B : \Ty\,(Γ.\,A)$ a natural isomorphism $\TmIso_\SigmaTy$
    between $\Tm\,Γ\,(\SigmaTy\,Γ\,A\,B)$ and pairs $\sum_{a : \Tm(Γ,A)} \Tm\,Γ,(B \cdot (\id.a))$.
  \end{itemize}
\end{definition}

\begin{lemma}
  The CwF $(\Vcat,\Ty,\Tm)$ supports a $\Sigma$-structure.
\end{lemma}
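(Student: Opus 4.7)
The plan is to mirror the $\Pi$-structure proof, exploiting the fact that $\Sigmaz$ is closed in $\Vz$ and decodes on the nose. First I would define
\[
  \SigmaTy\,X\,A\,B := \lambda (x : \Elz{X}).\,\Sigmaz{(A\,x)}{(\lambda a.\,B\,(x,a))}.
\]
Because decoding commutes definitionally with $\Sigmaz$, this gives
$\Elz{(\SigmaTy\,X\,A\,B\,x)} \equiv \sum_{a : \Elz{(A\,x)}} \Elz{(B\,(x,a))}$, which is the expected pointwise interpretation of a $\Sigma$-type. Naturality in $X$ reduces, after unfolding, to a reindexing via precomposition and therefore holds up to $\refl$, just as in the $\Pi$-case.

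Next I would construct $\TmIso_{\SigmaTy}$. Unfolding the two sides using the definitions of $\Tm$, context extension $X.A \equiv \Sigmaz{X}{A}$, and the definitional decoding above, we obtain
\[
  \Tm\,X\,(\SigmaTy\,X\,A\,B) \equiv \prod_{x : \Elz{X}} \sum_{a : \Elz{(A\,x)}} \Elz{(B\,(x,a))}
\]
on the one side, and on the other side
\[
  \sum_{a : \Tm(X,A)} \Tm\,X\,(B \cdot (\id.a)) \equiv \sum_{a : \prod_{x : \Elz{X}} \Elz{(A\,x)}} \prod_{x : \Elz{X}} \Elz{(B\,(x,a\,x))}.
\]
The required isomorphism is then precisely the type-theoretic axiom of choice (distributivity of $\Pi$ over $\Sigma$), namely the canonical equivalence $\prod_{x} \sum_{y} Z(x,y) \simeq \sum_{f} \prod_{x} Z(x,f\,x)$, which is a standard equivalence in MLTT with $\eta$ for $\Sigma$. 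Both directions are definable directly: left-to-right sends $h$ to $(\lambda x.\TTfst (h\,x),\lambda x.\TTsnd (h\,x))$, and right-to-left sends $(a,b)$ to $\lambda x.(a\,x,b\,x)$. The round-trips hold by $\refl$ thanks to $\eta$ for $\Sigma$ and $\Pi$.

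Finally I would verify naturality of $\TmIso_{\SigmaTy}$ in $\Gamma$. Since substitution acts by precomposition in our model and the choice equivalence is constructed pointwise from projections and pairing, naturality reduces to an equation between dependent tuples that is proved by function extensionality together with the $\eta$-laws. I expect the main obstacle to be exactly the one flagged in Remark \ref{remark:why-sig-structure-sucks}: transporting along the naturality witness of $\SigmaTy$ when stating the naturality square for $\TmIso_{\SigmaTy}$ forces a cascade of path and transport bookkeeping, which is why the formalization stops short of this. On paper, however, each component holds up to $\refl$ once one is willing to invoke function extensionality, so the verification is routine.
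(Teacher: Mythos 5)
Your proposal is correct and follows exactly the paper's approach: the same definition of $\SigmaTy$ via $\Sigmaz$ applied pointwise, with the remaining structure holding essentially by reflexivity thanks to the definitional decoding. In fact you supply more detail than the paper's own (very terse) proof by correctly identifying $\TmIso_{\SigmaTy}$ as the canonical equivalence $\prod_{x}\sum_{y}Z(x,y)\simeq\sum_{f}\prod_{x}Z(x,f\,x)$, which is precisely what the paper leaves implicit under ``the remaining structure follows directly.''
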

\begin{proof}
  We define $\SigmaTy$ as follows:
  \[
    \SigmaTy\,X\,A\,B := \lambda (x : \Elz{X}).\, \Sigmaz{(A\,x)}{(\lambda a.B\,(x,a))}
  \]
  The remaining structure follows directly. In particular, even though the naturality requires
  complex path algebra to state properly in the specific CwF on $\Vcat$ all these paths are given by
  reflexivity.
\end{proof}

By similar considerations, we may define and close $(\Vcat,\Ty,\Tm)$ under many
other connectives: extensional identity types, booleans, natural numbers, and universes among
others. Putting all of this together, we conclude the following:

\begin{therm} \label{thm:vettmodel}
  $\Vcat$ supports a model of extensional type theory with the standard connectives.
\end{therm}

We note that this result, combined with Proposition~\ref{prop:slice-equiv} and the series of
results about $\Elzop$ preserving various categorical connectives can be summarized by the
informal slogan: $\Vcat$ supports a model of type theory which internalizes the set-level
fragment of the ambient type theory.

\section{Relationship to set models of type theory and other set universes in HoTT/UF}
\label{sec:otherv}

The idea of set theoretic semantics of type theory is of course an old
and natural one. An early reference where this is written down more
formally is the master thesis of \citet[Chapter~5]{Salvesen84}. As
discussed in the introduction the work presented in this paper goes
back to the model of CZF in type theory of \citet{Aczel78}. Aczel also
interpreted extensional type theory with universes in an extension of
CZF with a hierarchy of inaccessible sets \citep{Aczel99}. In fact,
Aczel's V occurs already in the PhD thesis of \citet{Leversha76} where
it was used to represent ordinals constructively.
Various earlier work has also relied on Aczel's V to model type
theory. For instance, \citet{Werner97} modeled the core system of
\Coq in ZFC and vice versa, using Aczel's encoding of sets. A
refinement by \citet{Barras10,BarrasHabil} models the core system of
\Coq system in intuitionistic ZF, and formalizes the model in \Coq
\citep{Coq}.  More recently, \citet{palmgren19} presented an
interpretation of extensional Martin-Löf type theory \citep{HAN} into
intensional Martin-Löf type theory via setoids, also relying on
Aczel's V. Palmgren's work was also formalized in \Agda.

Aczel's V was revisited in HoTT/UF by \citet{Gylterud18,Gylterud20}
who observed that this gives a universe of multisets, but that one can
restrict it, as in Definition~\ref{def:v}, to get a universe of
h-sets. These universes of (multi)sets has recently also been further
studied by Escardó and de Jong who has their own \Agda formalization
as part of the \systemname{TypeTopology} project
\citep{typetop}. Among many other things, they have two more proofs of
Theorem~\ref{thm:Vzhset} formalized.
Various HITs for representing \emph{finite} multisets have also been
considered in HoTT/UF
\citep{BasoldGeuversVanDerWeide17,FruminGeuvers+18,ChoudhuryFiore19,internalizing,veltri,veltri2},
however these are of course not sufficient to model full type theory.

We will now discuss other approaches to constructing strict categories
of sets in HoTT/UF that could also serve as internal models of type
theory. These often require various extensions of the quite minimal
univalent type theory that we have relied on in this paper.

\subsection{The cumulative hierarchy in the HoTT Book}

The HoTT Book postulates a universe of sets as a higher inductive type
called the \emph{cumulative hierarchy}
\citepalias[Definition~10.5.1]{HoTT13}. \citet[Section 8]{Gylterud18}
establishes an equivalence between the HoTT Book $V$ and $\Vz$\!, which
makes it possible to transfer all of our results over to $V$\!. One
remark about the HoTT Book $V$ is that it is h-set truncated, while
$\Vz$ is not. This means that the eliminator one gets for the HoTT
Book $V$ only lets one directly eliminate into h-sets, while $\Vz$ can
be directly eliminated into types of arbitrary homotopy level.
Similarly many basic constructions, like $∈\,:\,V → V → \U$, is a bit
more complicated to define for the HoTT Book $V$ as it is not
sufficient to only define them for point constructors, but one has to
check that the definitions are compatible with the higher constructors
as well. A practical and appealing aspect of $\Vz$ is hence that it is
easy to define operations by pattern-matching on it. Another is that
it is not postulated, but simply constructed from W-types.

\subsection{Inductive-recursive universes}

An alternative approach to modeling type theory in type theory is to
rely on quotient inductive-inductive types (QIITs) as considered by
\citet{AK16}. However, they run into the same problem as discussed
above when working in HoTT and trying to eliminate their QIIT into
$\TThSetU$. In particular, as the QIIT representation of type theory is
h-set truncated they cannot eliminate directly into $\TThSetU$ as it
is a 1-type (the same issue also applies to the HoTT Book $V$). The
authors resolve this by considering an inductive-recursive universe
closed under the relevant structure, which can be shown to be a set
without any need to set truncate. This enjoys many of the nice
properties of $\Vz$\!, like $\El$ decoding type constructors
definitionally, but induction-recursion is proof theoretically quite
strong and it is again interesting to emphasize that we can construct
$\Vz$ using only W-types.

\subsection{Covered Marked Extensional Well-founded Orders (MEWOs)}

In their recent paper, \cite{dejong_kraus_forsberg_xu_2023} show that
the HoTT Book $V$ is equivalent to the type of covered marked
extensional well-founded orders ($\MEWOcov$), and hence to $\Vz$\!. The
results in this paper thus imply that the type $\MEWOcov$ can be
equipped with a universe structure. A strength of the universe $\Vz$
is the computational aspect of the decoding function $\Elz$\!.
Unfortunately, the two underlying maps of the equivalence between
$\Vz$ and $\MEWOcov$ do not compose definitionally to the identity
when going from $\Vz$ to $\MEWOcov$ and back again. This means that
the induced decoding for $\MEWOcov$ given by going to $\Vz$ and then
applying $\Elzop$ is not as computationally well-behaved as $\Elzop$
on $\Vz$\!, as the decoding will only hold up to propositional equality.

\subsection{Relationship to $\hSetcatU$}
\label{sec:othervs:hset}

One reason to consider the category of iterative sets is to regard it
as a replacement for $\hSetcatU$. As noted in
Section~\ref{sec:vcategory}, $\Elzop$ induces a fully-faithful
functor, but it may fail to be essentially surjective. The statement
that $\Elzop$ is essentially surjective corresponds to Shulman's
\emph{axiom of well-founded materialization}
\citep{shulman_stack_2010} and which is, in turn, implied by the axiom
of choice.

If the functor is essentially surjective, it forms a categorical
equivalence between $\Vcat$ and a univalent category and thus
describes $\hSetcatU$ as the \emph{Rezk completion}~\citep{AKS15}
of~$\Vcat$. Informally, this shows, modulo classical axioms, that
$\Vcat$ is a more rigid presentation of $\hSetcatU$. Moreover, even
without additional axioms $\Vcat$ and $\Vz$ are closed under
essentially every construction of interest.

\subsection{Well-ordered sets}

Another approach to defining a strict universe of sets, inspired by
Voevodsky's simplicial set model \citep{KapulkinLumsdaine12}, is to
consider well-ordered sets. By relying heavily on Zermelo's
well-ordering principle, and hence choice, one can obtain a strict
category of well-ordered sets with the relevant structure, also as a
subcategory of $\hSetcatU$. This was experimented with in
\systemname{UniMath} \citep{UniMath} by \citet{woset}. However, this
turned out to be harder to work with formally than expected because of
all the propositional truncations and hence not completed.
Furthermore, if completed this would only \emph{merely} give us the
existence of an internal model and hence lead to a weaker result than
Theorem~\ref{thm:vettmodel}.

\section{Conclusion and future work}
\label{sec:conclusion}

We have constructed a universe of h-sets that is itself an h-set and structured
it into an internal model of extensional type theory. This main result can
perhaps also be proven for other h-set universes of h-sets, such as the ones
mentioned in Section \ref{sec:otherv}, but certain properties of our
construction makes it very convenient to work with, also formally. First and
foremost, the definitional decoding of type formers means that one avoids
complex transports. Secondly, the construction is carried out using basic
type-formers, and has a (provable) elimination principle which directly allows elimination
into general types. This development works in a fairly minimalist univalent
type theory, as long as it has W-types. These W-types can be large, and need
only be small if one wants to reflect a hierarchy of universes, as in Section
\ref{sec:vuniverses}. The results should thus have a broad applicability in
models of~HoTT/UF.

In the formalization, we stopped short of adding additional structure
to the CwF on $\Vcat$ after Π-types. The obstacles are in fact not in
providing the structure for our model, such as Σ-structure, but the
general formulation of what that extra structure constitutes on CwFs
based on categories (see Remark
\ref{remark:why-sig-structure-sucks} in Section \ref{sec:vcwf}). To
the best of our knowledge there are no other formalizations of CwFs
with $\Sigma$-structure out there that do not assume UIP or other
axioms and which do not use setoids or a more extensional equality. It
would be interesting to attempt formalizing this in cubical type
theory \citep{CCHM18} where equality of $\Sigma$-types is easier to
work with because of the primitive path-over types in the form of
$\mathsf{PathP}$-types. An experiment along these lines was performed
by \citet{vezzosicwf} in \CubicalAgda \citep{cubicalagda2}.  In this
small formalization Vezzosi considered the CwF structure on h-set
valued presheaves. It would of course not have been possible to fully
complete this for the same reason as discussed in this paper, but it
turned out that some of the constructions and equations that one has
to check were easier than in a corresponding formalization in \UniMath
by \citet{presheafunimath}. This also suggests a further direction to
explore: $\Vcat$ valued presheaves. These should enjoy the same nice
properties as $\hSetcatU$ valued presheaves, but it should be
possible to organize also them into a model of type theory internally
in HoTT/UF.

Another avenue of further study is to take a closer look at Shulman's axiom of
well-founded materialization. Just like univalence, it makes sense to formulate
this axiom relative to a given universe of types. The construction of $\Vz$ can
be carried out on any universe, so a reasonable reformulation of well-founded
materialization in type theory could be: a universe $\U$ has well-founded
materialization if $\Elz : \Vz_\U → \hSetcatU$ is essentially surjective.
As mentioned, this follows from AC, but does not seem to be inherently
non-constructive. For instance, $\Vz$ itself has well-founded materialization
for trivial reasons. The most pertinent question is perhaps whether well-founded
materialization and univalence can constructively coexist. If we start with a
univalent $\U$, one could take the image of $\Elz$ in $\hSetcatU$ to
obtain a univalent universe which also somewhat trivially has well-founded
materialization. However, it is not immediate that this is closed under Π-types
and Σ-types as a naïve attempt quickly runs into choice problems.
 
\bibliographystyle{msclike}
\bibliography{refs.bib}

\end{document}